\definecolor{light-gray}{gray}{0.8}
\theoremstyle{plain}
\newtheorem{MyTheo}{Theorem}
\newtheorem{MyAppLemma}{Lemma}
\theoremstyle{definition}
\newtheorem{MyDef}{Definition}
\newtheorem{MyAssump}{Assumption}
\theoremstyle{remark}
\newcommand{\review}[1]{{\color{blue}{#1}}}
\newcommand{\figref}[1]{\figurename~\ref{#1}}
\newcommand{\tabref}[1]{\tablename~\ref{#1}}
\newcommand{\secref}[1]{Section~\ref{#1}}
\newcommand{\bheading}[1]{{\vspace{0.3\baselineskip}\noindent{\textbf{#1}}}}
\newcommand{\lheading}[1]{{\vspace{0.3\baselineskip}\noindent{\textit{#1}}}}
\newcommand{\eg}{\textit{e.g.}\xspace}
\newcommand{\ie}{\textit{i.e.}\xspace}
\def\figref#1{figure~\ref{#1}}
\def\Figref#1{Figure~\ref{#1}}
\def\secref#1{section~\ref{#1}}
\def\eqref#1{equation~\ref{#1}}
\def\1{\bm{1}}
\def\rk{{\textnormal{k}}}
\def\ru{{\textnormal{u}}}
\def\rvu{{\mathbf{i}}}
\def\rvk{{\mathbf{k}}}
\def\rvu{{\mathbf{u}}}
\def\rmU{{\mathbf{U}}}
\def\vm{{\bm{m}}}
\def\vx{{\bm{x}}}
\def\vy{{\bm{y}}}
\DeclareMathAlphabet{\mathsfit}{\encodingdefault}{\sfdefault}{m}{sl}
\SetMathAlphabet{\mathsfit}{bold}{\encodingdefault}{\sfdefault}{bx}{n}
\icmltitlerunning{PuriDefense: Randomized Local Implicit Adversarial Purification}
\begin{document}

\twocolumn[
    \icmltitle{PuriDefense: Randomized Local Implicit Adversarial Purification \\
        for Defending Black-box Query-based Attacks}



    \icmlsetsymbol{equal}{*}

    \begin{icmlauthorlist}
        \icmlauthor{Ping Guo}{yyy}
        \icmlauthor{Xiang Li}{xxx}
        \icmlauthor{Zhiyuan Yang}{yyy}
        \icmlauthor{Xi Lin}{yyy}
        \icmlauthor{Qingchuan Zhao}{yyy}
        \icmlauthor{Qingfu Zhang}{yyy}
    \end{icmlauthorlist}

    \icmlaffiliation{yyy}{City University of Hong Kong, Hong Kong}
    \icmlaffiliation{xxx}{Southeast University, China}

    \icmlcorrespondingauthor{Ping Guo}{pingguo5-c@my.cityu.edu.hk}

    \icmlkeywords{Adversarial Attacks}

    \vskip 0.3in
]



\printAffiliationsAndNotice{}  

\begin{abstract}
    Black-box query-based attacks constitute significant threats to Machine Learning as a Service (MLaaS) systems since they can generate adversarial examples without accessing the target model's architecture and parameters. Traditional defense mechanisms, such as adversarial training, gradient masking, and input transformations, either impose substantial computational costs or compromise the test accuracy of non-adversarial inputs. To address these challenges, we propose an efficient defense mechanism, PuriDefense, that employs random patch-wise purifications with an ensemble of lightweight purification models at a low level of inference cost. These models leverage the local implicit function and rebuild the natural image manifold. Our theoretical analysis suggests that this approach slows down the convergence of query-based attacks by incorporating randomness into purifications. Extensive experiments on CIFAR-10 and ImageNet validate the effectiveness of our proposed purifier-based defense mechanism, demonstrating significant improvements in robustness against query-based attacks.
\end{abstract}

\section{Introduction\label{sec:intro}}
Deep neural networks (\textit{DNNs}), while presenting remarkable performance across various applications, are mostly leaning to become subject to \emph{adversarial attacks}, where even slight perturbations to the inputs can severely compromise their predictions~\citep{szegedy:2014:intriguing}.
This notorious vulnerability significantly challenges the inherent robustness of DNNs and could even make the situation much worse when it comes to security-critical scenarios, such as facial recognition~\citep{deng:2019:efficient} and autonomous driving~\citep{cao:2019:adversarial}.
Accordingly, attackers have devised both \emph{white-box attacks} if having full access to the DNN model and \emph{black-box attacks} in case the model is inaccessible.
While black-box attacks appear to be more challenging, it is often considered a more realistic threat model, and its state-of-the-art (SOTA) could leverage a limited number of queries to achieve high successful rates against closed-source commercial platforms, \ie Clarifai~\citep{clarifai} and Google Cloud Vision API~\citep{google_vision}, presenting a disconcerting situation~\cite{liu:2017:delving}.

Defending black-box query-based attacks in real-world large-scale Machine-Learning-as-a-Service (\textit{MLaaS}) systems calls for an extremely low extra inference cost. This is because business companies, such as Facebook, handle millions of image queries daily and thereby increase the extra cost for defense a million-fold~\citep{facebook}. This issue prohibits testing time defenses to run multiple inferences to achieve \emph{certified robustness}~\citep{cohen:2019:certified,salman:2020:denoised}. Moreover, training time defenses, \ie retraining the DNNs with large datasets to enhance their robustness against adversarial examples (\eg \emph{adversarial training}~\citep{madr:2018:towards} and \emph{gradient masking}~\citep{florian:2018:ensemble}), impose substantial economic and computational costs attributed to the heavy training expense. Therefore, there is a critical need for a lightweight yet effective strategy to perform adversarial purifications to enable low inference cost for achieving robustness.

Given the aforementioned challenges, recent research efforts have been devoted to either eliminating or disturbing adversarial perturbations prior to the forwarding of the query image to the classifier. Nevertheless, the existing methods that include both heuristic transformations and neural network-based adversarial purification models have certain limitations in removing adversarial perturbations. While heuristic transformation methods cause minimal impact on cost, they merely disrupt adversarial perturbations and often negatively impact the testing accuracy of non-adversarial inputs~\cite{xu:2018:feature, qin:2021:random}. Moreover, neural network-based purifications aiming to completely eradicate adversarial perturbations can even exceed the computational burden of the classifier itself~\cite{carlini:2023:certified}. Consequently, there have been no effective defense mechanisms that can achieve both high robustness and low computational cost against black-box query-based attacks.

In this study, we present PuriDefense, a novel random patch-wise image purification mechanism that leverages local implicit functions to enhance the robustness of classifiers against query-based attacks. Local implicit functions, initially developed for super-resolution applications~\citep{lim:2017:edsr,zhang:2018:image}, have shown potential in efficiently mitigating white-box attacks~\citep{ho:2022:disco}. Nonetheless, our empirical analysis indicates that a naively integrated local implicit function with a classifier yields a system vulnerable to query-based attacks, achieving only a \textbf{5.1\%} robust accuracy on the ImageNet dataset in potent attack scenarios. Our theoretical examination attributes this vulnerability to the absence of inherent randomness in the purification process.

While randomness can be incorporated through a collection of diverse purifiers~\citep{ho:2022:disco}, this approach linearly increases the inference cost. In contrast, PuriDefense enhances diversity without escalating inference costs, utilizing a pool of varied purifiers and assigning them to local image patches randomly. Our analysis confirms that with a broader spectrum of purifiers, the system exhibits higher resilience and effectively slows down the convergence of the query-based attacks.



Our contributions are summarized as follows:
\begin{itemize}
    \item We propose a novel defense mechanism using the local implicit function to randomly purify adversarial image patches using multiple purification models while maintaining the inference cost of a single model. Our work is the first to extend the local implicit function to defend against query-based attacks.
    \item We present a theoretical analysis illustrating the defense mechanism's effectiveness based on the convergence of black-box attacks. Our theoretical analysis suggests the robustness of our system escalates with the number of purifiers in the purifier pool.
    \item Our theoretical investigation further reveals the susceptibilities of \review{\textit{deterministic purifications}'} to query-based attacks and quantifies the enhanced robustness achieved by integrating randomness into any preprocessor-based defense strategy.
    \item We validate our method's defense capabilities through comprehensive experiments conducted on the CIFAR-10 and ImageNet datasets, specifically targeting SOTA query-based attacks.
\end{itemize}

\begin{figure*}[t]
    \centering
    \includegraphics[width=0.62\textwidth]{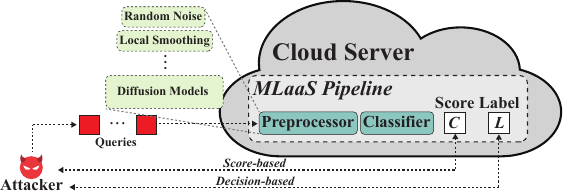}
    \vspace{-0.75\baselineskip}
    \caption{\small An illustration of the MLaaS system with preprocessor-based defense mechanisms under attack. The attackers can query the model with input $\vx$ and get the returned information $\mathcal{M}(\vx)$ which can be the confidence scores or the predicted label.\label{fig:ml_system}}
    \vspace{-1\baselineskip}
\end{figure*}

\section{Related Work\label{sec:back}}

\bheading{Query-based Attacks.}
Query-based attacks, continually querying the models to generate adversarial examples, are categorized as either \emph{score-based} or \emph{decision-based}, based on their access to confidence scores or labels, respectively.

Score-based attacks treat the MLaaS model, encompassing its pre-processors, the primary model, and post-processors, as an opaque entity. The typical objective function in this scenario is the marginal loss of confidence scores, illustrated in Equation~(\ref{eq:margin_loss}). To address the challenge of minimizing this loss, methods such as \emph{gradient estimation} and \emph{random search} are employed. \citet{ilyas:2018:blackbox} developed the pioneering limited-query score-based method, utilizing Natural Evolutionary Strategies (NES) for gradient estimation. This sparked a flurry of subsequent studies focusing on gradient estimation, methods like ZO-SGD~\citep{liu:2019:signsgd} and SignHunter~\citep{aldujaili:2020:signbits}. The forefront of score-based attack methods is represented by the Square attack~\citep{andriushchenko:2020:square}, which employs random search via localized square patch modifications and is often referenced as an important benchmark in model robustness assessments~\citep{croce:2021:robustbench}. Other algorithms that employ random search, such as SimBA~\citep{guo:2019:simba}, exist but do not achieve the effectiveness of the Square attack.

With respect to decision-based attacks, when confidence scores are not available, the available label information is used instead. \citet{ilyas:2018:blackbox} also pioneered work in this area, employing NES to optimize a heuristic proxy function with a limited number of queries. The gradient estimation method for decision-based attacks evolves to be more efficient by forming new optimization problems (e.g., OPT~\citep{cheng:2019:query}), and focusing on the gradient's sign rather than its magnitude (e.g., Sign-OPT~\citep{cheng:2020:signopt} and HopSkipJump~\citep{chen:2020:hop}). While direct search used in Boundary Attack~\citep{brendel:2018:decision} is the first decision-based attack, the HopSkipJump Attack is currently recognized as the most advanced attack.

\bheading{Adversarial Purification.}
The recent surge in the implementation of testing-time defenses, primarily for adversarial purification to enhance model robustness, is noteworthy. \citet{yoon:2021:adv} proposes the use of a score-based generative model to mitigate adversarial perturbations. Meanwhile, \citet{mao:2021:adv} employs self-supervised learning techniques, such as contrastive loss, to purify images. Following the success of diffusion models, purifications have been utilized to establish certified robustness for image classifiers~\citep{nie:2022:diffusion,carlini:2023:certified}. However, the complexity and vast number of parameters of diffusion models result in significantly reduced inference speeds of the classification systems.

The recent literature also notes the introduction of the local implicit function model as a defense against white-box attacks~\citep{ho:2022:disco}. Nonetheless, this approach has limitations, being trained only on white-box attacks and lacking theoretical guarantees for resisting black-box attacks. In contrast, our study restructures the network to omit multi-resolution support and accelerates the inference time by a factor of four in the implementation level. Furthermore, our defense mechanism design ensures that the inference speed remains constant as the number of purifier models increases, with a theoretical guarantee for resisting black-box attacks. Further information on general defense mechanisms is available for interested readers in Appendix~\ref{app:back}.

\section{Preliminaries\label{sec:preliminary}}
\subsection{Threat Model}
In the context of black-box query-based attacks, our threat model assumes that attackers possess have a limited comprehension of the target model. The interaction with the model, usually deployed on cloud servers, is confined to submitting queries and obtaining outputs as confidence scores or classifications. Attackers are deprived of deeper understanding of the model's internal mechanisms or the datasets employed. \Figref{fig:ml_system} depicts the MLaaS system under attack

\subsection{Query-based Attacks}
\subsubsection{Score-based Attacks}
Consider a classifier $\mathcal{M}:\mathcal{X} \to \mathcal{Y}$ deployed on a cloud server, where $\mathcal{X}$ denotes the input space and $\mathcal{Y}$ represents the output space. Attackers may query the model using an input $\vx \in \mathcal{X}$ and receive the corresponding output $\mathcal{M}(\vx) \in \mathcal{Y}$. When the model furnishes its output, typically as a confidence score, directly to the attackers, this situation typifies a score-based attack.

In this context, attackers craft an adversarial example $\vx_{adv}$ using the original example $\vx$ and its corresponding true label $y$. Their goal is to address the optimization challenge below, which is necessary to conduct an untargeted attack:
\begin{equation}\label{eq:margin_loss}
    \begin{aligned}
         & \min _{\vx_{adv} \in \mathcal{N}_R(\vx)} f(\vx_{adv}),                           \\
         & f(\vx_{adv})=\mathcal{M}_y(\vx_{adv})-\max _{j \neq y} \mathcal{M}_j(\vx_{adv}).
    \end{aligned}
\end{equation}
Here, $\mathcal{N}_R(\vx)=\left\{\vx' |\|\vx'-\vx\|_p\leq R\right\}$ represents the $\ell_p$-norm ball centered at $\vx$. For targeted attacks, the variable $j$ is specified as the designated target label, in contrast to being the index associated with the highest confidence score that is not the true label. An attack is considered successful if it results in the objective function falling below zero.

In white-box attacks, the projected gradient descent algorithm is employed; however, in a black-box setting, attackers lack access to gradient details. Consequently, black-box methods typically rely on two approaches to approximate the direction of function descent: \emph{gradient estimation} and \emph{heuristic search}. Additional information on these techniques can be found in Appendix~\ref{app:search}.

\subsubsection{Decision-based Attacks}
In decision-based attack scenarios, attackers receive only the output label from the model after querying it. In response to the noncontinuous nature of the objective function's landscape, researchers have developed a variety of optimization problems~\cite{cheng:2019:query}. For instance, \citet{ilyas:2018:blackbox} propose using a surrogate for the objective function, while \citet{cheng:2020:signopt} and \citet{aithal:2022:boundary} innovate approaches based on geometric concepts. Furthermore, \citet{chen:2020:hop} tackle the original optimization problem by utilizing the gradient's sign. The principles outlined in our theoretical examination of score-based attacks are also applicable to decision-based attacks since both employ similar black-box optimization techniques.

\subsection{Adversarial Purification}

\begin{table*}[t]
    \centering
    \caption{\small List of heuristic transformations and SOTA purification models. Randomness is introduced in DISCO~\citep{ho:2022:disco} by using an ensemble of DISCO models to generate features for random coordinate querying, which is of high computational cost.\label{tab:purifiers}}
    \vspace{0.25\baselineskip}
    \begin{tabular}{lccc}
        \toprule
        Method                                     & Randomness            & Type      & Inference Cost \\
        \midrule
        Bit Reduction~\citep{xu:2018:feature}      & \ding{55}             & Heuristic & Low            \\
        Local Smoothing~\citep{xu:2018:feature}    & \ding{55}             & Heuristic & Low            \\
        JPEG Compression~\citep{raff:2019:barrage} & \ding{55}             & Heuristic & Low            \\
        Random Noise~\citep{qin:2021:random}       & \ding{51}             & Heuristic & Low            \\
        Score-based Model~\citep{yoon:2021:adv}    & \ding{51}             & Neural    & High           \\
        DDPM~\citep{nie:2022:diffusion}            & \ding{51}             & Neural    & High           \\
        DISCO~\citep{ho:2022:disco}                & \ding{55} / \ding{51} & Neural    & Median / High  \\
        \textbf{PuriDefense (Ours)}                & \ding{51}             & Neural    & Median         \\
        \bottomrule
    \end{tabular}
    \vspace{-1\baselineskip}
\end{table*}

Adversarial purification has recently emerged as a central wave of defense against adversarial attacks, which aims to remove or disturb the adversarial perturbations via \emph{heuristic transformations} and \emph{purification models}. We have provided a list of widely heuristic transformations and SOTA purification models in \tabref{tab:purifiers}.

\bheading{Heuristic Transformations.}
Heuristic transformations are unaware of the adversarial perturbations and only aim to disturb the adversarial perturbations by shrinking the image space~\citep{xu:2018:feature} or deviating the gradient estimation using random noises~\citep{qin:2021:random}.

\bheading{Purification Models.}
Advanced models for purification have been developed to eliminate adversarial perturbations. Notably, the score-based generative model~\citep{yoon:2021:adv} and DDPM~\citep{nie:2022:diffusion}, and local implicit purification models like DISCO~\citep{ho:2022:disco}. However, of these, only the local implicit purification demonstrates  a moderate inference cost necessary for practical, real-world applications.

With defense mechanisms deployed as pre-processors in the MLaaS system as shown in \Figref{fig:ml_system}, the attackers need to break the whole MLaaS pipeline to mount a successful attack. Although randomness has been identified as crucial in enhancing system robustness~\citep{raff:2019:barrage,sitawarin:2022:demystifying}, the naive implementation of randomness through the ensemble of multiple purifiers, such as DISCO, incurs a linear increase in the inference cost.

\section{Randomized Local Implicit Purification\label{sec:method}}

\subsection{Our Motivation}
While purification models can process adversarial images, our research, detailed in our theoretical analysis (\secref{subsec:theory}) and substantiated in the experiments (\secref{subsec:overall}), indicates that \emph{a single deterministic purifier} is insufficient for enhancing the robustness of a system against query-based attacks and may inadvertently introduce new vulnerabilities. While a direct ensemble of purifiers may seem effective in principle, the consequent linear increase in inference cost proportional to the number of purifiers renders the approach impractical for real-world deployment.

In response to this challenge, we propose a novel random patch-wise purification algorithm, PuriDefense, that capitalizes on a pool of purifiers to counter query-based attacks efficiently. PuriDefense employs multiple end-to-end purification models that utilize a local implicit function to process input images at any scale. Our theoretical findings demonstrate that the enhanced robustness of PuriDefense scales with the number of purifiers used. Most importantly, it maintains a consistent inference cost regardless of the number of purifiers, thereby offering a viable solution for deployment in practical settings.

\begin{figure*}[t]
    \centering
    \includegraphics[width=0.66\textwidth]{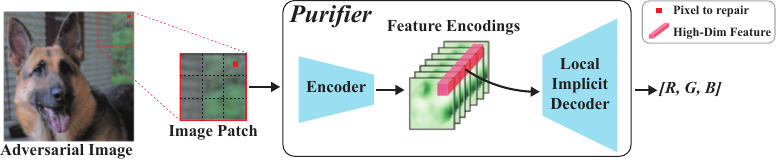}
    \vspace{-0.75\baselineskip}
    \caption{\small An illustration of repairing a pixel within an image patch with our end-to-end purification model.  The encoder diffuses nearby information of the pixel into its high-dimensional feature. Then the decoder reconstructs its RGB value with respect to this feature information. Note that the inference of pixels within one image patch can be performed in parallel.}
    \vspace{-1.\baselineskip}
    \label{fig:purifier}
\end{figure*}

\subsection{Image Purification via Local Implicit Function}
Consider a purification model $\vm(\vx): \mathcal{X} \to \mathcal{X}$ designed to project adversarial images back onto the natural images manifold. When attackers an craft adversarial example $\vx'$ from the original image $\vx$, randomly drawn from the natural image distribution $\mathcal{D}$. The purification model $\vm(\vx)$ can be refined by minimizing the following loss function:
\begin{equation}\label{eq:train_loss}
    \mathcal{L} = \mathbb{E}_{\mathcal{D}}\|\vx - \vm(\vx')\|_p + \lambda\mathbb{E}_{\mathcal{D}}\|\vx -\vm(\vx)\|_p,
\end{equation}
where the parameter $\lambda$ balances the two components of the loss. A larger $\lambda$ signifies a greater emphasis on fidelity to unaltered images. In practice, the second term is often disregarded ($\lambda=0$). The $\ell_1$-norm ($p=1$) is commonly utilized for its pixel-level accuracy in quantifying the discrepancy between the original and the purified image.

\bheading{Local Implicit Purification Model.}
To project the images suffering from adversarial perturbations onto the natural image manifold, we leverage a local implicit function designed to reconstruct the image area surrounding the perturbed pixels. Our model adopts an encoder-decoder framework. In this process, image patches are first encoded into a high-dimensional feature space. Subsequently, the decoder harnesses this space to restore the original RGB values of the pixels. This reconstruction is performed concurrently for all pixels within a patch, as depicted in \Figref{fig:purifier}, which also details the architecture of our model.

\bheading{Efficient Design.}
In contrast to the initial strategy of utilizing a local implicit function to defend against white-box attacks~\citep{ho:2022:disco}, our design removes the multi-scale support by dispensing with positional encoding and local ensemble inference. This simplification yields a fourfold acceleration in inference time at the implementation level. A comprehensive elucidation of this accelerated implementation is provided in Appendix~\ref{app:structure}.

\begin{figure}[t]
    \centering
    \includegraphics[width=0.45\textwidth]{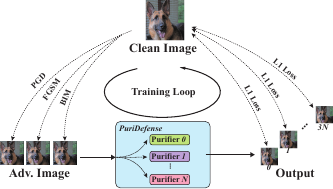}
    \vspace{-0.5\baselineskip}
    \caption{\small The training process of PuriDefense. Firstly, different adversarial images are generated using white-box attack algorithms. Then, we train every purifier inside PuriDefense with these adversarial images and original images under $\ell_1$ loss.\label{fig:training}}
    \vspace{-1.\baselineskip}
\end{figure}

\subsection{Training Purifiers in PuriDefense}
In PuriDefense, our strategy entails implementing a series of purification models to establish a varied ensemble for randomized patch-wise purification. Consequently, we have developed training protocols that enable the simultaneous training of diverse models, each with a distinct architecture. The procedural flow of our training methodology is delineated in Figure~\ref{fig:training}.

During the training process, adversarial samples generated by applying three advanced white-box attack techniques-PGD~\citep{madr:2018:towards}, FGSM~\citep{goodfellow:2015:explaining}, and BIM~\citep{kurakin:2017:adversarial_in}-to non-protected models. The purifiers then cleanse the perturbed inputs. Next, the purified samples and original images are used to calculate the training loss as per Equation~(\ref{eq:train_loss}). A back-propagation algorithm subsequently optimizes all the purifiers in PuriDefense. Additional information like regarding the non-defended models and the purifiers' architecture can be found in Appendix~\ref{app:diversity}.

\subsection{Random Patch-wise Purification.}
Random patch-wise purification constitutes the core of our defensive approach, designed to introduce randomness into the purification process, which is critical for defending against query-based attacks. It maintains the computational cost of employing multiple purifiers at a level that is equivalent to that of a single model, which is crucial for real-world deployment, as opposed to the current approach (ensemble) that escalates the inference cost proportionally to the number of purifiers employed~\citep{ho:2022:disco}.

Our method encodes image patches using randomly selected purifiers from the purifier pool and subsequently merges their outputs to construct the final feature representation, in contrast to previous techniques that necessitate full-image encoding followed by random feature selection.  As depicted in Figure \ref{fig:two_practice}, this approach—although employing potentially deterministic purifiers—ensures randomness by the unpredictable selection of both the purifiers for performing purifications. Such a strategy significantly augments the diversity of purifiers while maintaining reasonable inference cost.

\begin{figure*}[t]
    \centering
    \includegraphics[width=0.76\textwidth]{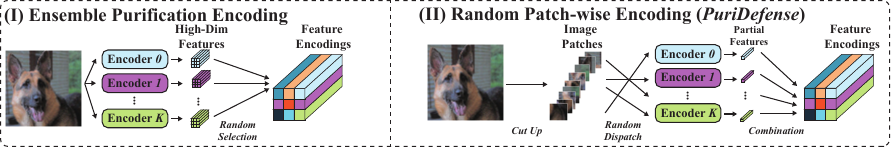}
    \vspace{-0.75\baselineskip}
    \caption{\small An illustration of the encoding process of ensemble method~\citep{ho:2022:disco} and our method. \textbf{Ensemble:} Ensemble method first encodes the image into multiple high-dimension features and then randomly combines their pieces to form the final feature representation. \textbf{Random Patch-wise:} Our method split the images into patches, forward them to randomly selected encoders, and use the output combination as the final feature representation. \label{fig:two_practice}}
    \vspace{-1.\baselineskip}
\end{figure*}

\subsection{Theoretical Analysis for Gradient-based Attacks} \label{subsec:theory}
Assume we have $K+1$ purifiers $\left\{\vm_0, \ldots, \vm_{K}\right\}$, the output of the new black-box system containing the $i$-th purifier is defined as $F^{(i)}(\vx) = f(\vm_i(\vx))$. Without loss of generality, we now perform analysis on breaking the system of the purifier $\vm_0$, denoted as $F(\vx) = f(\vm_0(\vx))$. Our following analysis utilizes the $\ell_2$-norm as the distance metric, which is the most commonly used norm for measuring the distance between two images.

Suppose the index of two independently drawn purifiers in our defense are $k_1$ and $k_2$, the attacker approximate the gradient of the function $F(\vx)$ with the following estimator:
\begin{equation}
    G_{\mu, K} = \frac{f(\vm_{k_1}(\vx+\mu \rvu)) - f(\vm_{k_2}(\vx))}{\mu}\rvu,
\end{equation}
where $\rvu$ is a standard Gaussian noise vector. The above gradient estimator provides an \emph{unbiased} estimation of the gradient of the function:
\begin{equation}
    F_{\mu, K}(\vx) = \frac{1}{K+1}\sum_{k=0}^{K}f_\mu(\vm_k(\vx)),
\end{equation}
where $f_\mu$ is the gaussian smoothing function of $f$. The detailed definition of the gaussian smoothing function is included in Appendix~\ref{app:definitions}. The convergence of the black-box attack is moving towards an \emph{averaged} optimal point of the functions $F^{(i)}$ formed with different purifiers $\vm_i$.

\bheading{Assumptions.}
For the original function $f(\vx)$, we have Assumption~\ref{assump:1} and Assumption~\ref{assump:2}. With regards to the purifiers, we assume each output dimension (every pixel in one channel) of their output also has the property in Assumption~\ref{assump:1} and Assumption~\ref{assump:2}. Then, we denote $L_0(\vm) = \max_i L_0(m_i)$ and $L_1(\vm) = \max_i L_1(m_i)$, where $m_i$ is the $i$-th dimension of the output of the purifier $\vm$.

\begin{MyAssump}
    \label{assump:1}
    $f(\vx)$ is Lipschitz-continuous, \ie, $|f(\vy)-f(\vx)|\leq L_0 (f)\|\vy-\vx\|$.
\end{MyAssump}
\begin{MyAssump}
    \label{assump:2}
    $f(\vx)$ is continuous and differentiable, and $\nabla f(\vx)$ is Lipschitz-continuous, \ie, $|\nabla f(\vy)-\nabla f(\vx)|\leq L_1 (f)\|\vy-\vx\|$.
\end{MyAssump}

Furthermore, we bound the diversity of the purifiers using the following property:
\begin{equation}
    \|\vm_i(\vx) - \vm_j(\vx)\| < \nu, \quad \forall i, j \in \left\{0, \ldots, K-1\right\}
\end{equation}
We cannot directly measure $\nu$, but we intuitively associate it with the number of purifiers. \textbf{The larger the number of purifiers, the larger $\nu$ is.}

\bheading{Notations.}
We denote the sequence of standard Gaussian noises used to approximate the gradient as $\rmU_t=\left\{\rvu_0,\ldots, \rvu_t\right\}$, with $t$ to be the update step. The purifier index sequence is denoted as $\rvk_t=\left\{\rk_0,\ldots, \rk_t\right\}$. The generated query sequence is denoted as $\left\{\vx_0, \vx_1,\dots, \vx_Q\right\}$. $d=|\mathcal{X}|$ as the input dimension.

With the above definitions and assumptions, we have Theorem~\ref{thm:1} for the convergence of the gradient-based attacks. The detailed proof is included in Appendix~\ref{app:proof1}.

\begin{MyTheo}
    \label{thm:1}
    Under Assumption 1, for any $Q\geq 0$, consider a sequence $\left\{\vx_t\right\}_{t=0}^Q$ generated using the update rule of gradient-based score-based attacks, with constant step size, \ie, $\eta = \sqrt{\frac{2R\epsilon }{(Q+1)L_0(f)^3d^2}}\cdot\sqrt{\frac{1}{L_0(\vm_0)\gamma(\vm_0, \nu)}}$, with $\gamma(\vm_0, \nu) = \frac{4\nu^2}{\mu^2} + \frac{4\nu}{\mu}L_0(\vm_0)d^{\frac{1}{2}}+L_0(\vm_0)^2d$. Then, the squared norm of gradient is bounded by:
    \begin{equation}
        \begin{aligned}
            \frac{1}{Q+1}\sum_{t=0}^Q \mathbb{E}_{\rmU_t, \rk_t}[\|\nabla F_{\mu, K} (\vx_t)\|^2] \\
            \leq \sqrt{\frac{2L_0(f)^5Rd^2}{(Q+1)\epsilon}}\cdot \sqrt{\gamma(\vm_0, \nu)L_0(\vm_0)^3}
        \end{aligned}
    \end{equation}
    The lower bound for the expected number of queries to bound the expected squared norm of the gradient of function $F_{\mu, K}$ of the order $\delta$ is
    \begin{equation}
        O(\frac{L_0(f)^5Rd^2}{\epsilon\delta^2}\gamma(\vm_0, \nu)L_0(\vm_0)^3)
    \end{equation}
\end{MyTheo}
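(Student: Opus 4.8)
The plan is to read the attacker's iteration as \emph{zeroth-order stochastic gradient descent} on the smoothed surrogate $F_{\mu,K}$ and to run the standard nonconvex descent argument (in the style of Nesterov--Spokoiny / Ghadimi--Lan), where the only non-routine ingredient is the second-moment bound on $G_{\mu,K}$ that surfaces the diversity parameter $\nu$. First I would collect the smoothing facts needed under Assumption~1 alone: since $f$ is $L_0(f)$-Lipschitz and $\vm_0$ is $L_0(\vm_0)$-Lipschitz, the composition $f\circ\vm_0$ is Lipschitz with constant $O(L_0(f)L_0(\vm_0))$, and its Gaussian smoothing has a Lipschitz gradient with $L_1(F_{\mu,K})$ of order $\tfrac{\sqrt d}{\mu}L_0(f)L_0(\vm_0)$. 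I would also record the Gaussian-smoothing gradient identity, from which the unbiasedness $\E_{\rk,\rvu}[G_{\mu,K}]=\nabla F_{\mu,K}$ asserted in the text follows immediately: the control variate $f(\vm_{k_2}(\vx))$ drops out because it is $\rvu$-independent and $\E[\rvu]=0$.

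The crux, and the step I expect to be the main obstacle, is the second-moment bound on $\E_{\rk,\rvu}\|G_{\mu,K}\|^2$, because this is where the independence of $k_1$ and $k_2$ must be converted into an irreducible $\nu$-sized error. I would split the finite-difference numerator into a \emph{perturbation} part and a \emph{diversity} part,
\[
f(\vm_{k_1}(\vx+\mu\rvu))-f(\vm_{k_2}(\vx)) = \underbrace{\big[f(\vm_{k_1}(\vx+\mu\rvu))-f(\vm_{k_1}(\vx))\big]}_{\le\,L_0(f)L_0(\vm_0)\,\mu\|\rvu\|} + \underbrace{\big[f(\vm_{k_1}(\vx))-f(\vm_{k_2}(\vx))\big]}_{\le\,L_0(f)\,\nu},
\]
bounding the first via Lipschitzness of $f$ and $\vm_{k_1}$ and the second via the diversity assumption $\|\vm_i(\vx)-\vm_j(\vx)\|<\nu$. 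Squaring, dividing by $\mu^2$, multiplying by $\|\rvu\|^2$, and taking the Gaussian moments $\E\|\rvu\|^2=d$, $\E\|\rvu\|^3=O(d^{3/2})$, $\E\|\rvu\|^4=O(d^2)$ then yields $\E\|G_{\mu,K}\|^2=O\!\big(L_0(f)^2 d\,\gamma(\vm_0,\nu)\big)$, the three terms $\tfrac{4\nu^2}{\mu^2}$, $\tfrac{4\nu}{\mu}L_0(\vm_0)d^{1/2}$, $L_0(\vm_0)^2 d$ of $\gamma$ arising respectively from the squared diversity part, the cross term, and the squared perturbation part. The care here is keeping the cross term and the $\|\rvu\|^3,\|\rvu\|^4$ moments under control with the standard Gaussian moment inequalities so the constants land exactly on the stated $\gamma$.

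With these two estimates the remainder is routine bookkeeping. I would apply the descent lemma for the $L_1(F_{\mu,K})$-smooth function to one step $\vx_{t+1}=\vx_t-\eta G_{\mu,K}(\vx_t)$ (the projection onto $\mathcal N_R(\vx)$ handled by non-expansiveness / absorbed into the diameter), take the conditional expectation over $(\rvu_t,\rk_t)$, use unbiasedness to turn the inner-product term into $-\eta\|\nabla F_{\mu,K}(\vx_t)\|^2$, and bound the quadratic term by $\tfrac{L_1(F_{\mu,K})\eta^2}{2}\E\|G_{\mu,K}\|^2$. Telescoping over $t=0,\dots,Q$ and dividing by $\eta(Q+1)$ gives
\[
\frac{1}{Q+1}\sum_{t=0}^{Q}\E_{\rmU_t,\rk_t}\|\nabla F_{\mu,K}(\vx_t)\|^2 \le \frac{\Delta_0}{\eta(Q+1)} + \frac{L_1(F_{\mu,K})\eta}{2}\,\E\|G_{\mu,K}\|^2,
\]
where $\Delta_0$ is the initial optimality gap of $F_{\mu,K}$ over $\mathcal N_R(\vx)$, controlled by $R$, the Lipschitz constants, and the smoothing parameter (this is where the $\epsilon$-dependence enters through the choice of $\mu$).

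Finally I would substitute the two bounds, plug in the stated constant step size $\eta$ — which is precisely the minimizer of a right-hand side of the form $\tfrac{A}{\eta}+B\eta$, so that $2\sqrt{AB}$ reproduces the claimed $\sqrt{1/(Q+1)}$ rate $\sqrt{2L_0(f)^5 R d^2\gamma(\vm_0,\nu)L_0(\vm_0)^3/((Q+1)\epsilon)}$ — and then invert the inequality by setting the bound $\le\delta$ and solving for $Q$ to read off the $O\!\big(L_0(f)^5 R d^2\gamma(\vm_0,\nu)L_0(\vm_0)^3/(\epsilon\delta^2)\big)$ query count. The qualitative payoff to emphasize throughout is that $\gamma$ grows with $\nu$, hence with the number of purifiers, and that its leading $4\nu^2/\mu^2$ term blows up as the attacker shrinks the finite-difference radius $\mu$; this is exactly the mechanism by which injecting randomness slows the attack's convergence.
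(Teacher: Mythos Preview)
Your plan is correct and follows essentially the same route as the paper: descent lemma for the smoothed surrogate, unbiasedness of $G_{\mu,K}$, a second-moment bound that exposes $\nu$, telescoping, then optimizing over the constant step size; the $\epsilon$ is indeed brought in through the choice $\mu\le \epsilon/(d^{1/2}L_0(F))$.

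The one substantive difference is the decomposition of the finite-difference numerator. You split it in \emph{two} pieces through the intermediate point $\vm_{k_1}(\vx)$, obtaining a single diversity term $|f(\vm_{k_1}(\vx))-f(\vm_{k_2}(\vx))|\le L_0(f)\nu$. The paper instead splits in \emph{three} pieces through $\vm_0(\vx+\mu\rvu)$ and $\vm_0(\vx)$, which produces \emph{two} diversity terms of size $L_0(f)\nu$ (one at $\vx+\mu\rvu$, one at $\vx$) and hence the combined bound $2L_0(f)\nu + L_0(F)\mu\|\rvu\|$. Squaring the paper's expression is what generates the factors of $4$ in $\gamma(\vm_0,\nu)=\tfrac{4\nu^2}{\mu^2}+\tfrac{4\nu}{\mu}L_0(\vm_0)d^{1/2}+L_0(\vm_0)^2 d$. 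Your two-way split would instead produce $\tfrac{\nu^2}{\mu^2}+\tfrac{2\nu}{\mu}L_0(\vm_0)d^{1/2}+L_0(\vm_0)^2 d$, so your claim that ``the constants land exactly on the stated $\gamma$'' is slightly off: you would recover the same structure and the same big-$O$ query bound, but with tighter leading constants than the paper's. If you want to reproduce the stated $\gamma$ verbatim, route the decomposition through the anchor purifier $\vm_0$ as the paper does.
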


\bheading{Single Deterministic Purifier.}
Setting $\nu$ to 0, we have $\gamma(\vm_0, 0)L_0(\vm_0)^2 = L_0(\vm_0)^5$, which is the only introduced term compared to the original convergence rate~\citep{nesterov:2017:random} towards $f(\vx)$. Meanwhile, the new convergence point becomes $F^*_\mu(\vx)$. We have the following conclusion for the convergence of the attack:
\begin{itemize}
    \item \textbf{Influence of $L_0(\vm_0)$}: For input transformations that \emph{shrink} the image space, since their $L_0(\vm_0) < 1$, they always allow a \emph{faster} rate of attack's convergence. For neural network purifiers, the presence of this term means their vulnerabilities is introduced into the black-box system, making it hard to quantify the robustness of the system.
    \item \textbf{Optimal point $F^*_\mu(\vx)$}: By using a deterministic transformations, the optimal point of the attack is changed from $f^*$ to $F^*_\mu(\vx)$. If we can inversely find an adversarial image $\vx^* = \vm(\vx^*)$, the robustness of the system is not improved at all. \emph{No current work can theoretically eliminate this issue.} This may open up a new direction for future research.
\end{itemize}

\begin{tcolorbox}
    [%
        enhanced,
        breakable,
        boxrule=0.1pt,
        left=1pt,
        right=1pt,
        bottom=1pt,
        top=1pt
    ]{
        \textbf{Research implications.}
        \small
        From the above analysis, we can see that a single deterministic purifier may \emph{1) accelerate} the convergence of the attack, and \emph{2) cannot protect} the adversarial point from being exploited.
    }
\end{tcolorbox}


\bheading{Pool of Deterministic Purifiers.}
The introduced term $\gamma(\vm_0, \nu)L_0(\vm_0)^2$ increase quadratically with $\nu$. This along with our intuition mentioned above suggests that \emph{the robustness of the system increases with the number of purifiers.} While adversarial optimal points persist, the presence of multiple optimal points under different purifiers serve as the \emph{first} trial to enhance the robustness of all purification-based methods.

To validate our theoretical analysis, we conduct experiments on the test set of the CIFAR-10 dataset~\citep{CIFAR10} with a weak non-defended ResNet-18 model~\citep{resnet18_cifar} as the classifier. Other general settings are the same as used in our later experiments. We use the Square Attack~\citep{andriushchenko:2020:square} as the attack algorithm. The convergence of the attack against our model and other input transformations is shown in \Figref{fig:convergence}. We can see a clear acceleration of the convergence of the attack with the introduction of transformations that \emph{shrink} the image space and powerful deterministic models (DISCO) fails to improve the robustness of the system. Another validation of our theoretical analysis is shown in \Figref{fig:nu_convergence} for proving the robustness of the system increases with the number of purifiers (associated with $\nu$).

\begin{figure}[t]
    \centering
    \includegraphics[width=0.42\textwidth]{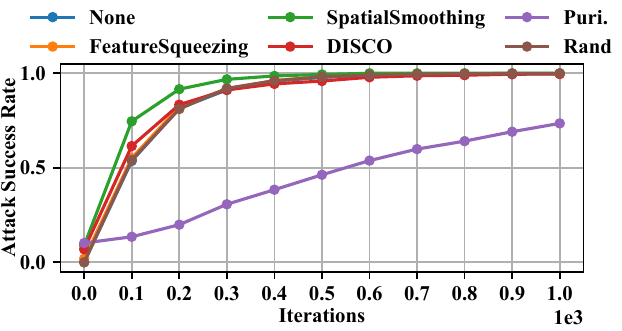}
    \vspace{-0.75\baselineskip}
    \caption{\small The convergence of the Square Attack on CIFAR-10 using different heuristic transformations and purifiers.\label{fig:convergence}}
    \vspace{-1.\baselineskip}
\end{figure}

\subsection{Theoretical Analysis for Gradient-free Attacks}
The heuristic direction of random search becomes:
\begin{equation}
    \label{eq:search}
    H_K(\vx) = f(\vm_{k_1}(\vx+\mu \ru)) - f(\vm_{k_2}(\vx+\mu \ru)).
\end{equation}
\begin{MyTheo}
    \label{thm:2}
    Under Assumption~\ref{assump:1}, using the update in Equation~(\ref{eq:search}),
    \begin{equation}
        P(Sign(H(\vx)) \neq Sign(H_K(\vx))) \leq \frac{2\nu L_0(f)}{|H(x)|}
    \end{equation}
\end{MyTheo}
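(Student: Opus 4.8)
The plan is to bound the probability of a sign disagreement by showing that the randomized search direction $H_K$ stays deterministically close to its deterministic counterpart, and that a sign flip can only occur when that counterpart is already small relative to this gap. Here I take $H(\vx)$ to be the search direction of the single target-purifier system, i.e. the quantity obtained from $H_K$ in Equation~(\ref{eq:search}) by replacing both independently sampled purifiers $\vm_{k_1}$ and $\vm_{k_2}$ with the target purifier $\vm_0$, so that $H_K$ and $H$ differ \emph{only} through the two purifier substitutions (at the two evaluation points used by the search rule).

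First I would write the difference $H_K(\vx)-H(\vx)$ as a sum of two terms, each of the form $f(\vm_{k}(\vz))-f(\vm_0(\vz))$ for the relevant evaluation point $\vz$. Applying Assumption~\ref{assump:1} (Lipschitz continuity of $f$) to each term bounds it by $L_0(f)\,\|\vm_{k}(\vz)-\vm_0(\vz)\|$, and the diversity bound $\|\vm_i(\vx)-\vm_j(\vx)\|<\nu$ then gives $|f(\vm_{k}(\vz))-f(\vm_0(\vz))|<\nu L_0(f)$. Summing the two contributions yields the key estimate $|H_K(\vx)-H(\vx)|\le 2\nu L_0(f)$, which holds for \emph{every} realization of the purifier indices.

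Next I would observe that $Sign(H(\vx))\neq Sign(H_K(\vx))$ forces $H$ and $H_K$ to lie on opposite sides of zero, which requires $|H_K(\vx)-H(\vx)|\ge |H(\vx)|$. Hence the sign-disagreement event is contained in $\{|H_K-H|\ge |H|\}$, and Markov's inequality applied to the nonnegative quantity $|H_K-H|$ gives
\begin{equation}
    \begin{aligned}
        P\big(Sign(H)\neq Sign(H_K)\big) &\le P\big(|H_K-H|\ge |H|\big) \\
        &\le \frac{\mathbb{E}\,|H_K-H|}{|H(\vx)|} \le \frac{2\nu L_0(f)}{|H(\vx)|},
    \end{aligned}
\end{equation}
which is exactly the claimed bound.

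The step requiring the most care is not the estimate itself---a one-line Lipschitz bound---but pinning down the correct reference direction $H$ so that $H_K-H$ decomposes into precisely the two purifier-swap terms, and confirming that the randomness enters only through the index selection. In fact, since $|H_K-H|\le 2\nu L_0(f)$ holds \emph{pointwise} in $(k_1,k_2)$, the result is essentially worst-case: the sign can never flip when $|H(\vx)|>2\nu L_0(f)$, and the stated bound exceeds $1$ otherwise, so a simple case split would also suffice. I would present the Markov route as the cleaner unified argument while noting this interpretation, since it clarifies that the diversity parameter $\nu$---and hence the purifier pool size---directly controls the attacker's probability of reading the descent direction incorrectly.
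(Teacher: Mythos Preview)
Your proposal is correct and follows essentially the same approach as the paper: both contain the sign-flip event in $\{|H_K-H|\ge |H|\}$, apply Markov's inequality, and use the Lipschitz constant of $f$ together with the diversity bound $\nu$ to control the deviation. The only difference is that the paper routes the bound on $\mathbb{E}|H_K-H|$ through the second moment via $\mathbb{E}|X|\le\sqrt{\mathbb{E}X^2}$ and $(a+b)^2\le 2a^2+2b^2$, whereas you bound $|H_K-H|\le 2\nu L_0(f)$ directly and pointwise by the triangle inequality---your route is slightly more economical and, as you note, actually shows the bound is deterministic rather than merely in expectation.
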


A similar increase in the robustness as Theorem~\ref{thm:1} can be observed with the increase of $\nu$. The detailed proof is included in Appendix~\ref{app:proof2}. This ensures the robustness of our defense against gradient-free attacks.
\section{Evaluation\label{sec:eva}}

\subsection{Experiment Settings}
\bheading{Datasets and Classification Models.}
For a comprehensive evaluation of PuriDense, we employ two benchmark datasets for testing adversarial attacks: CIFAR-10~\citep{CIFAR10} and ImageNet~\citep{ImageNet}. Our evaluation is conducted on two balanced subsets, which contain 1,000 and 2,000 test images randomly sampled from the CIFAR-10 test set and ImageNet validation set, respectively. These subsets are evenly distributed across the 10 classes of CIFAR-10 and 200 randomly chosen classes from ImageNet. In terms of classification models, we selected models from RobustBench~\citep{croce:2021:robustbench}. The details of both the standardly trained and adversarially trained models are described in Appendix~\ref{app:model}.

\begin{figure}[t]
  \centering
  \includegraphics[width=0.42\textwidth]{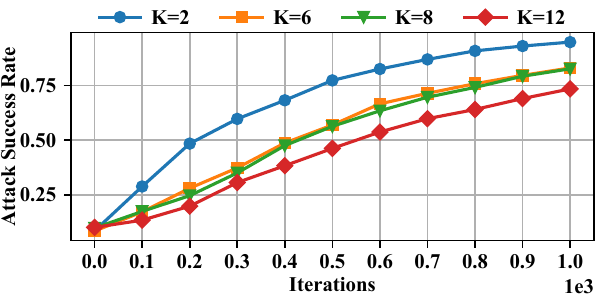}
  \vspace{-1\baselineskip}
  \caption{\small The convergence of the Square Attack on CIFAR-10 using with different numbers of purifiers used.\label{fig:nu_convergence}}
  \vspace{-1.25\baselineskip}
\end{figure}

\bheading{Attack and Defense Methods.}
We consider 5 query-based attacks for evaluation: NES~\citep{ilyas:2018:blackbox}, SimBA~\citep{guo:2019:simba}, Square~\citep{andriushchenko:2020:square}, Boundary~\citep{brendel:2018:decision}, and HopSkipJump~\citep{chen:2020:hop}. Comprehensive descriptions and configurations of each attack can be found in Appendix~\ref{app:attack}. The perturbation budget of $\ell_\infty$ attacks is set to 8/255 for CIFAR-10 and 4/255 for ImageNet. For $\ell_2$ attacks, the perturbation budget is set to 1.0 for CIFAR-10 and 5.0 for ImageNet. For defense mechanism, adversarially trained models from \citep{gowal:2020:uncovering} are used as a strong robust baseline. Moreover, we include a deterministic purification model DISCO~\cite{ho:2022:disco} and spatial smoothing~\citep{xu:2018:feature} for direct comparison. Finally, widely used random noise defense~\citep{qin:2021:random} serve as a baseline for introducing randomness. The detailed settings of each defense method are described in Appendix~\ref{app:defense}. We report the robust accuracy of each defense method against each attack with 200/2500 queries for both CIFAR-10 and ImageNet,  reflecting models' performance under mild and extreme query-based attacks.

\begin{table*}[h]
  \centering
  \caption{Evaluation results of PuriDefense and other defense methods on CIFAR-10 under 5 query-based attacks. The robust accuracy under 200/2500 queries is reported. The best defense mechanism under 2500 queries are highlighted in bold and marked with gray.\label{tab:overall}}
  \vspace{0.25\baselineskip}
  \resizebox{0.98\textwidth}{!}{
    \begin{tabular}{|c|l|c|c|c|c|c|c|}
      \hline
      Datasets                                                                       & Methods                          & Acc.                                                      & NES($\ell_\infty$)                                             & SimBA($\ell_2$)                                                & Square($\ell_\infty$)                                          & Boundary($\ell_2$)                                             & HopSkipJump($\ell_\infty$)                                     \\
      \hline
      \multirow{6}{*}{\begin{tabular}{c} CIFAR-10 \\  (WideResNet-28)\end{tabular} } & None                             & 94.8                                                      & 83.4/11.9                                                      & 49.1/3.0                                                       & 26.6/0.9                                                       & 88.4/60.0                                                      & 72.0/72.6                                                      \\
      \cline{2-8}
                                                                                     & AT~\citep{gowal:2020:uncovering} & 85.5                                                      & 83.8/78.8                                                      & 83.0/74.9                                                      & 77.5/67.3                                                      & \cellcolor{light-gray}\cellcolor{light-gray}\textbf{84.7/84.2} & 85.3/84.0                                                      \\
                                                                                     & FS~\citep{xu:2018:feature}       & 76.4                                                      & 50.7/7.6                                                       & 42.7/5.2                                                       & 5.7/0.0                                                        & 70.7/39.8                                                      & 64.6/62.7                                                      \\
                                                                                     & IR~\citep{qin:2021:random}       & 77.1                                                      & 74.7/71.1                                                      & 71.5/66.0                                                      & 64.1/60.5                                                      & 74.4/76.4                                                      & 71.0/74.1                                                      \\
                                                                                     & DISCO~\citep{ho:2022:disco}      & \cellcolor{light-gray}\cellcolor{light-gray}\textbf{86.3} & 83.0/34.7                                                      & 77.7/15.7                                                      & 21.0/2.1                                                       & 84.1/66.0                                                      & 81.0/81.7                                                      \\
                                                                                     & \textbf{PuriDefense (Ours)}      & 84.1                                                      & \cellcolor{light-gray}\cellcolor{light-gray}\textbf{84.2/81.8} & 81.1/74.4                                                      & 77.3/67.8                                                      & 82.9/82.9                                                      & \cellcolor{light-gray}\cellcolor{light-gray}\textbf{82.5/84.8} \\
                                                                                     & \textbf{PuriDefense-AT (Ours)}   & 84.6                                                      & 83.7/81.1                                                      & \cellcolor{light-gray}\cellcolor{light-gray}\textbf{83.3/80.6} & \cellcolor{light-gray}\cellcolor{light-gray}\textbf{81.5/78.7} & 84.4/84.1                                                      & 84.3/84.1                                                      \\
      \hline
    \end{tabular}
  }
  \vspace{-0.5\baselineskip}
\end{table*}

\subsection{Overall Defense Performance}\label{subsec:overall}

Our numerical results on the effectiveness of the defense mechanisms are shown in Table~\ref{tab:overall}.

\bheading{Clean Accuracy.}
Empirical evaluations demonstrate that PuriDefense yields clean accuracy comparable to the standardly trained model. Notably, integrating PuriDefense with models trained adversarially enhances their clean accuracy at no additional cost. Further experiments offer an in-depth analysis of the variations in clean accuracy with the implementation of PuriDefense.

\bheading{Failure of Deterministic Purification.}
As presented in \tabref{tab:overall}, spatial smoothing accelerates the convergence of the attacks, and DISCO experiences a significant decrease in robust accuracy under 2500 queries when faced with a powerful attack (Square Attack), which is even lower than that of models without defenses. These outcomes reinforce our theoretical discourse, suggesting that deterministic transformations may inadvertently introduce additional vulnerabilities and expedite adversarial attacks. Consequently, incorporating randomness into purification processes is not only theoretically grounded but also empirically validated.

\bheading{Effectiveness of PuriDefense.}
PuriDefense consistently attains the highest robust accuracy against various attacks on the CIFAR-10 dataset under extreme query-based attacks (2500 queries). When integrated with standardly trained model, the system reaches a robust accuracy comparable to SOTA adversarially trained models. Furthermore, PuriDefense, when combined with adversarially trained models, sets new benchmarks in robust accuracy. Its efficacy against a variety of query-based attacks demonstrates PuriDefense's versatility as a universal defense mechanism effective against both $\ell_\infty$ and $\ell_2$ attacks.

\begin{figure}[t]
  \centering
  \includegraphics[width=0.42\textwidth]{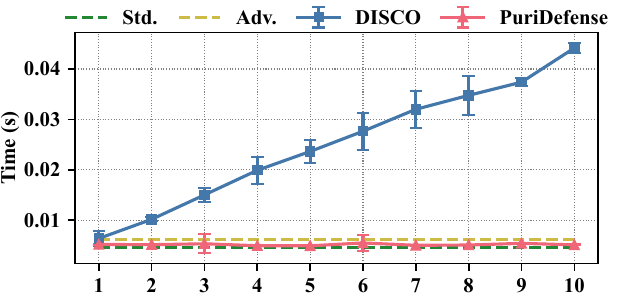}
  \vspace{-0.8\baselineskip}
  \caption{\small The inference speed of DISCO, PuriDefense, along with the standalone classifier with or without adversarial training on CIFAR-10 dataset.\label{fig:cifar10_latency}}
  \vspace{-1\baselineskip}
\end{figure}

\subsection{Inference Speedup}
Our mechanism achieves a constant inference cost, which we verify in this section. The inference speed of DISCO and PuriDefense was tested on a workstation equipped with a single NVIDIA RTX 4090 GPU. We set the batch size to 1 and varied the number of purifiers from 1 to 10. To minimize the influence of data transmission delay, we measured the time of the last 900 inferences out of a total of 1000 iterations. To ensure fair comparison, both DISCO and PuriDefense utilized identical encoders and decoders; therefore, differences in inference stem solely from their respective methods. We employed the same models, standardized and adversarially trained, as established in \secref{sec:eva} for baseline comparison.

\bheading{Results.}
For CIFAR-10, the outcomes are presented in Figure~\ref{fig:cifar10_latency}. Additional results for ImageNet are provided in Appendix~\ref{app:infer_speed}, corroborating the consistency of the findings. Unlike our method, which sustains a steady inference cost regardless of the number of purifiers, DISCO's cost escalates almost linearly. PuriDefense maintains its cost at the baseline classifier level, demonstrating its advantage over prevailing diffusion-based purification models.

\subsection{Influence on Clean Accuracy}
To further assess the impact of PuriDefense on clean accuracy, we evaluate the performance decline for both CIFAR-10 and ImageNet datasets when utilizing all the standalone purification models implemented in DISCO and PuriDefense. Results in \Figref{fig:clean_acc} in Appendix~\ref{app:clean} show that purification models illustrate comparable clean accuracy on low-resolution datasets, \ie CIFAR-10, and achieves a higher clean accuracy on high-resolution datasets, \ie ImageNet. We attribute this to the fact that natural image manifold exists in high-resolution datasets.

To test the influence of the number of the image patches on clean accuracy, we vary the number of patches ranging in  $\left\{ 1\times 1, 3\times 3, 5\times 5\right\}$ using ImageNet dataset. The results in \Figref{fig:clean_acc_matt} in Appendix~\ref{app:clean} shows that increasing the number of patches does not significantly affect the clean accuracy, PuriDefense achieves comparable clean accuracy to the case without any defense mechanism.

\section{Conclusion}
This paper introduces a novel theory-backed image purification mechanism utilizing local implicit function to defend deep neural networks against query-based adversarial attacks. The mechanism enhances classifier robustness and reduces successful attacks whilst also addressing vulnerabilities of deterministic transformations. Its effectiveness and robustness, which increase with the addition of purifiers, have been authenticated via extensive tests on CIFAR-10 and ImageNet. Our work highlights the need for dynamic and efficient defense mechanisms in MLaaS systems.



\newpage
\section*{Impact Statement}

This paper presents work whose goal is to advance the field of Machine Learning. There are many potential societal consequences of our work, none which we feel must be specifically highlighted here.

\bibliography{000-paper}

\begin{thebibliography}{53}
\providecommand{\natexlab}[1]{#1}
\providecommand{\url}[1]{\texttt{#1}}
\expandafter\ifx\csname urlstyle\endcsname\relax
  \providecommand{\doi}[1]{doi: #1}\else
  \providecommand{\doi}{doi: \begingroup \urlstyle{rm}\Url}\fi

\bibitem[Aithal \& Li(2022)Aithal and Li]{aithal:2022:boundary}
Aithal, M.~B. and Li, X.
\newblock Boundary defense against black-box adversarial attacks.
\newblock In \emph{26th International Conference on Pattern Recognition, {ICPR}}. {IEEE}, 2022.

\bibitem[Al{-}Dujaili \& O'Reilly(2020)Al{-}Dujaili and O'Reilly]{aldujaili:2020:signbits}
Al{-}Dujaili, A. and O'Reilly, U.
\newblock Sign bits are all you need for black-box attacks.
\newblock In \emph{8th International Conference on Learning Representations, ({ICLR})}, 2020.

\bibitem[Andriushchenko et~al.(2020)Andriushchenko, Croce, Flammarion, and Hein]{andriushchenko:2020:square}
Andriushchenko, M., Croce, F., Flammarion, N., and Hein, M.
\newblock Square attack: {A} query-efficient black-box adversarial attack via random search.
\newblock In \emph{Computer Vision - ({ECCV}) - 16th European Conference}, Lecture Notes in Computer Science. Springer, 2020.

\bibitem[Brendel et~al.(2018)Brendel, Rauber, and Bethge]{brendel:2018:decision}
Brendel, W., Rauber, J., and Bethge, M.
\newblock Decision-based adversarial attacks: Reliable attacks against black-box machine learning models.
\newblock In \emph{6th International Conference on Learning Representations ({ICLR})}, 2018.

\bibitem[Cao et~al.(2019)Cao, Xiao, Cyr, Zhou, Park, Rampazzi, Chen, Fu, and Mao]{cao:2019:adversarial}
Cao, Y., Xiao, C., Cyr, B., Zhou, Y., Park, W., Rampazzi, S., Chen, Q.~A., Fu, K., and Mao, Z.~M.
\newblock Adversarial sensor attack on lidar-based perception in autonomous driving.
\newblock In \emph{Proceedings of the 2019 {ACM} {SIGSAC} Conference on Computer and Communications Security, ({CCS})}. {ACM}, 2019.

\bibitem[Carlini \& Wagner(2017)Carlini and Wagner]{carlini:2017:towards}
Carlini, N. and Wagner, D.~A.
\newblock Towards evaluating the robustness of neural networks.
\newblock In \emph{2017 {IEEE} Symposium on Security and Privacy, ({SP})}. {IEEE} Computer Society, 2017.

\bibitem[Carlini et~al.(2023)Carlini, Tram{\`{e}}r, Dvijotham, Rice, Sun, and Kolter]{carlini:2023:certified}
Carlini, N., Tram{\`{e}}r, F., Dvijotham, K.~D., Rice, L., Sun, M., and Kolter, J.~Z.
\newblock (certified!!) adversarial robustness for free!
\newblock In \emph{The Eleventh International Conference on Learning Representations, {ICLR}}, 2023.

\bibitem[Chen et~al.(2020)Chen, Jordan, and Wainwright]{chen:2020:hop}
Chen, J., Jordan, M.~I., and Wainwright, M.~J.
\newblock Hopskipjumpattack: {A} query-efficient decision-based attack.
\newblock In \emph{2020 {IEEE} Symposium on Security and Privacy ({SP}) 2020}. {IEEE}, 2020.

\bibitem[Chen et~al.(2022)Chen, Huang, Tao, Wu, Xie, and Huang]{chen:2022:aaa}
Chen, S., Huang, Z., Tao, Q., Wu, Y., Xie, C., and Huang, X.
\newblock Adversarial attack on attackers: Post-process to mitigate black-box score-based query attacks.
\newblock In \emph{NeurIPS}, 2022.

\bibitem[Chen et~al.(2021)Chen, Liu, and Wang]{chen:2021:learning}
Chen, Y., Liu, S., and Wang, X.
\newblock Learning continuous image representation with local implicit image function.
\newblock In \emph{{IEEE} Conference on Computer Vision and Pattern Recognition, ({CVPR})}. Computer Vision Foundation / {IEEE}, 2021.

\bibitem[Cheng et~al.(2019)Cheng, Le, Chen, Zhang, Yi, and Hsieh]{cheng:2019:query}
Cheng, M., Le, T., Chen, P., Zhang, H., Yi, J., and Hsieh, C.
\newblock Query-efficient hard-label black-box attack: An optimization-based approach.
\newblock In \emph{7th International Conference on Learning Representations ({ICLR})}, 2019.

\bibitem[Cheng et~al.(2020)Cheng, Singh, Chen, Chen, Liu, and Hsieh]{cheng:2020:signopt}
Cheng, M., Singh, S., Chen, P.~H., Chen, P., Liu, S., and Hsieh, C.
\newblock Sign-opt: {A} query-efficient hard-label adversarial attack.
\newblock In \emph{8th International Conference on Learning Representations ({ICLR})}, 2020.

\bibitem[Clarifai(2022)]{clarifai}
Clarifai, I.
\newblock The generative developer platform.
\newblock \url{https://www.clarifai.com/}, 2022.
\newblock Accessed: 2023-09-25.

\bibitem[Cohen et~al.(2019)Cohen, Rosenfeld, and Kolter]{cohen:2019:certified}
Cohen, J., Rosenfeld, E., and Kolter, J.~Z.
\newblock Certified adversarial robustness via randomized smoothing.
\newblock In \emph{Proceedings of the 36th International Conference on Machine Learning, {ICML}}, Proceedings of Machine Learning Research. {PMLR}, 2019.

\bibitem[Croce et~al.(2021)Croce, Andriushchenko, Sehwag, Debenedetti, Flammarion, Chiang, Mittal, and Hein]{croce:2021:robustbench}
Croce, F., Andriushchenko, M., Sehwag, V., Debenedetti, E., Flammarion, N., Chiang, M., Mittal, P., and Hein, M.
\newblock Robustbench: a standardized adversarial robustness benchmark.
\newblock In \emph{Proceedings of the Neural Information Processing Systems Track on Datasets and Benchmarks 1, (NeurIPS)}, 2021.

\bibitem[Dadalto(2022)]{resnet18_cifar}
Dadalto, E.
\newblock Resnet18 trained on cifar10.
\newblock \url{https://huggingface.co/edadaltocg/resnet18_cifar10}, 2022.
\newblock Accessed: 2023-07-01.

\bibitem[Deng et~al.(2009)Deng, Dong, Socher, Li, Li, and Fei{-}Fei]{ImageNet}
Deng, J., Dong, W., Socher, R., Li, L., Li, K., and Fei{-}Fei, L.
\newblock Imagenet: {A} large-scale hierarchical image database.
\newblock In \emph{Computer Vision and Pattern Recognition ({CVPR})}, 2009.

\bibitem[Dong et~al.(2019)Dong, Su, Wu, Li, Liu, Zhang, and Zhu]{deng:2019:efficient}
Dong, Y., Su, H., Wu, B., Li, Z., Liu, W., Zhang, T., and Zhu, J.
\newblock Efficient decision-based black-box adversarial attacks on face recognition.
\newblock In \emph{{IEEE} Conference on Computer Vision and Pattern Recognition, ({CVPR})}. Computer Vision Foundation / {IEEE}, 2019.

\bibitem[Goodfellow et~al.(2015)Goodfellow, Shlens, and Szegedy]{goodfellow:2015:explaining}
Goodfellow, I.~J., Shlens, J., and Szegedy, C.
\newblock Explaining and harnessing adversarial examples.
\newblock In \emph{3rd International Conference on Learning Representations, ({ICLR})}, 2015.

\bibitem[Google(2022)]{google_vision}
Google.
\newblock Visual ai.
\newblock \url{https://cloud.google.com/vision}, 2022.
\newblock Accessed: 2023-09-25.

\bibitem[Gowal et~al.(2020)Gowal, Qin, Uesato, Mann, and Kohli]{gowal:2020:uncovering}
Gowal, S., Qin, C., Uesato, J., Mann, T.~A., and Kohli, P.
\newblock Uncovering the limits of adversarial training against norm-bounded adversarial examples.
\newblock \emph{CoRR}, 2020.

\bibitem[Guo et~al.(2019)Guo, Gardner, You, Wilson, and Weinberger]{guo:2019:simba}
Guo, C., Gardner, J.~R., You, Y., Wilson, A.~G., and Weinberger, K.~Q.
\newblock Simple black-box adversarial attacks.
\newblock In \emph{Proceedings of the 36th International Conference on Machine Learning, ({ICML})}, Proceedings of Machine Learning Research. {PMLR}, 2019.

\bibitem[He et~al.(2016)He, Zhang, Ren, and Sun]{kaiming:2016:resnet}
He, K., Zhang, X., Ren, S., and Sun, J.
\newblock Deep residual learning for image recognition.
\newblock In \emph{Computer Vision and Pattern Recognition ({CVPR})}, 2016.

\bibitem[Ho \& Vasconcelos(2022)Ho and Vasconcelos]{ho:2022:disco}
Ho, C. and Vasconcelos, N.
\newblock {DISCO:} adversarial defense with local implicit functions.
\newblock In \emph{NeurIPS}, 2022.

\bibitem[Ilyas et~al.(2018)Ilyas, Engstrom, Athalye, and Lin]{ilyas:2018:blackbox}
Ilyas, A., Engstrom, L., Athalye, A., and Lin, J.
\newblock Black-box adversarial attacks with limited queries and information.
\newblock In \emph{Proceedings of the 35th International Conference on Machine Learning, ({ICML})}. {PMLR}, 2018.

\bibitem[Krizhevsky(2009)]{CIFAR10}
Krizhevsky, A.
\newblock {Learning Multiple Layers of Features from Tiny Images}.
\newblock Technical report, Univ. Toronto, 2009.

\bibitem[Kurakin et~al.(2017{\natexlab{a}})Kurakin, Goodfellow, and Bengio]{kurakin:2017:adversarial}
Kurakin, A., Goodfellow, I.~J., and Bengio, S.
\newblock Adversarial machine learning at scale.
\newblock In \emph{International Conference on Learning Representations ({ICLR})}, 2017{\natexlab{a}}.

\bibitem[Kurakin et~al.(2017{\natexlab{b}})Kurakin, Goodfellow, and Bengio]{kurakin:2017:adversarial_in}
Kurakin, A., Goodfellow, I.~J., and Bengio, S.
\newblock Adversarial examples in the physical world.
\newblock In \emph{5th International Conference on Learning Representations, ({ICLR}), Workshop Track Proceedings}. OpenReview.net, 2017{\natexlab{b}}.

\bibitem[Lim et~al.(2017)Lim, Son, Kim, Nah, and Lee]{lim:2017:edsr}
Lim, B., Son, S., Kim, H., Nah, S., and Lee, K.~M.
\newblock Enhanced deep residual networks for single image super-resolution.
\newblock In \emph{2017 {IEEE} Conference on Computer Vision and Pattern Recognition Workshops, ({CVPR}) Workshops}. {IEEE} Computer Society, 2017.

\bibitem[Liu et~al.(2019)Liu, Chen, Chen, and Hong]{liu:2019:signsgd}
Liu, S., Chen, P., Chen, X., and Hong, M.
\newblock signsgd via zeroth-order oracle.
\newblock In \emph{7th International Conference on Learning Representations, ({ICLR})}, 2019.

\bibitem[Liu et~al.(2018)Liu, Cheng, Zhang, and Hsieh]{liu:2018:towards}
Liu, X., Cheng, M., Zhang, H., and Hsieh, C.
\newblock Towards robust neural networks via random self-ensemble.
\newblock In \emph{Computer Vision - {ECCV} 2018 - 15th European Conference}, Lecture Notes in Computer Science. Springer, 2018.

\bibitem[Liu et~al.(2017)Liu, Chen, Liu, and Song]{liu:2017:delving}
Liu, Y., Chen, X., Liu, C., and Song, D.
\newblock Delving into transferable adversarial examples and black-box attacks.
\newblock In \emph{5th International Conference on Learning Representations, ({ICLR})}. OpenReview.net, 2017.

\bibitem[Madry et~al.(2018)Madry, Makelov, Schmidt, Tsipras, and Vladu]{madr:2018:towards}
Madry, A., Makelov, A., Schmidt, L., Tsipras, D., and Vladu, A.
\newblock Towards deep learning models resistant to adversarial attacks.
\newblock In \emph{6th International Conference on Learning Representations, ({ICLR})}. OpenReview.net, 2018.

\bibitem[Mao et~al.(2021)Mao, Chiquier, Wang, Yang, and Vondrick]{mao:2021:adv}
Mao, C., Chiquier, M., Wang, H., Yang, J., and Vondrick, C.
\newblock Adversarial attacks are reversible with natural supervision.
\newblock In \emph{2021 {IEEE/CVF} International Conference on Computer Vision, ({ICCV})}. {IEEE}, 2021.

\bibitem[Nesterov \& Spokoiny(2017)Nesterov and Spokoiny]{nesterov:2017:random}
Nesterov, Y.~E. and Spokoiny, V.~G.
\newblock Random gradient-free minimization of convex functions.
\newblock \emph{Found. Comput. Math.}, 2017.

\bibitem[Nicolae et~al.(2018)Nicolae, Sinn, Tran, Buesser, Rawat, Wistuba, Zantedeschi, Baracaldo, Chen, Ludwig, Molloy, and Edwards]{nicolae:2018:art}
Nicolae, M.-I., Sinn, M., Tran, M.~N., Buesser, B., Rawat, A., Wistuba, M., Zantedeschi, V., Baracaldo, N., Chen, B., Ludwig, H., Molloy, I., and Edwards, B.
\newblock Adversarial robustness toolbox v1.2.0.
\newblock \emph{CoRR}, 2018.

\bibitem[Nie et~al.(2022)Nie, Guo, Huang, Xiao, Vahdat, and Anandkumar]{nie:2022:diffusion}
Nie, W., Guo, B., Huang, Y., Xiao, C., Vahdat, A., and Anandkumar, A.
\newblock Diffusion models for adversarial purification.
\newblock In \emph{International Conference on Machine Learning, ({ICML})}, Proceedings of Machine Learning Research. {PMLR}, 2022.

\bibitem[Papernot \& McDaniel(2016)Papernot and McDaniel]{papernot:2016:distillation}
Papernot, N. and McDaniel, P.~D.
\newblock On the effectiveness of defensive distillation.
\newblock \emph{CoRR}, 2016.

\bibitem[Qin et~al.(2021)Qin, Fan, Zha, and Wu]{qin:2021:random}
Qin, Z., Fan, Y., Zha, H., and Wu, B.
\newblock Random noise defense against query-based black-box attacks.
\newblock In \emph{Advances in Neural Information Processing Systems 34: Annual Conference on Neural Information Processing Systems 2021, NeurIPS 2021, December 6-14, 2021, virtual}, 2021.

\bibitem[Raff et~al.(2019)Raff, Sylvester, Forsyth, and McLean]{raff:2019:barrage}
Raff, E., Sylvester, J., Forsyth, S., and McLean, M.
\newblock Barrage of random transforms for adversarially robust defense.
\newblock In \emph{{IEEE} Conference on Computer Vision and Pattern Recognition, ({CVPR})}. Computer Vision Foundation / {IEEE}, 2019.

\bibitem[Rauber et~al.(2020)Rauber, Zimmermann, Bethge, and Brendel]{jonas:2020:foolbox}
Rauber, J., Zimmermann, R., Bethge, M., and Brendel, W.
\newblock Foolbox native: Fast adversarial attacks to benchmark the robustness of machine learning models in pytorch, tensorflow, and jax.
\newblock \emph{Journal of Open Source Software}, 2020.

\bibitem[Salman et~al.(2020{\natexlab{a}})Salman, Ilyas, Engstrom, Kapoor, and Madry]{salman:2020:do}
Salman, H., Ilyas, A., Engstrom, L., Kapoor, A., and Madry, A.
\newblock Do adversarially robust imagenet models transfer better?
\newblock In \emph{Advances in Neural Information Processing Systems 33: Annual Conference on Neural Information Processing Systems 2020, (NeurIPS)}, 2020{\natexlab{a}}.

\bibitem[Salman et~al.(2020{\natexlab{b}})Salman, Sun, Yang, Kapoor, and Kolter]{salman:2020:denoised}
Salman, H., Sun, M., Yang, G., Kapoor, A., and Kolter, J.~Z.
\newblock Denoised smoothing: {A} provable defense for pretrained classifiers.
\newblock In \emph{Advances in Neural Information Processing Systems 33: Annual Conference on Neural Information Processing Systems 2020, NeurIPS}, 2020{\natexlab{b}}.

\bibitem[Sitawarin et~al.(2022)Sitawarin, Golan{-}Strieb, and Wagner]{sitawarin:2022:demystifying}
Sitawarin, C., Golan{-}Strieb, Z.~J., and Wagner, D.~A.
\newblock Demystifying the adversarial robustness of random transformation defenses.
\newblock In \emph{International Conference on Machine Learning, {ICML}}, Proceedings of Machine Learning Research. {PMLR}, 2022.

\bibitem[Szegedy et~al.(2014)Szegedy, Zaremba, Sutskever, Bruna, Erhan, Goodfellow, and Fergus]{szegedy:2014:intriguing}
Szegedy, C., Zaremba, W., Sutskever, I., Bruna, J., Erhan, D., Goodfellow, I.~J., and Fergus, R.
\newblock Intriguing properties of neural networks.
\newblock In \emph{2nd International Conference on Learning Representations ({ICLR})}, 2014.

\bibitem[Torchvision(2023)]{resnet50_image}
Torchvision.
\newblock Resnet50 - torchvision main documentation.
\newblock \url{https://pytorch.org/vision/main/models/generated/torchvision.models.resnet50.html}, 2023.
\newblock Accessed: 2023-11-11.

\bibitem[Tram{\`{e}}r et~al.(2018)Tram{\`{e}}r, Kurakin, Papernot, Goodfellow, Boneh, and McDaniel]{florian:2018:ensemble}
Tram{\`{e}}r, F., Kurakin, A., Papernot, N., Goodfellow, I.~J., Boneh, D., and McDaniel, P.~D.
\newblock Ensemble adversarial training: Attacks and defenses.
\newblock In \emph{6th International Conference on Learning Representations ({ICLR})}, 2018.

\bibitem[VentureBeat(2022)]{facebook}
VentureBeat.
\newblock Facebook user data: 845m monthly users, 2.7b daily likes \& comments.
\newblock \url{https://venturebeat.com/business/facebook-ipo-usage-data/}, 2022.
\newblock Accessed: 2023-09-25.

\bibitem[Xu et~al.(2018)Xu, Evans, and Qi]{xu:2018:feature}
Xu, W., Evans, D., and Qi, Y.
\newblock Feature squeezing: Detecting adversarial examples in deep neural networks.
\newblock In \emph{25th Annual Network and Distributed System Security Symposium, {NDSS}}. The Internet Society, 2018.

\bibitem[Yang et~al.(2020)Yang, Duan, Hu, Salman, Razenshteyn, and Li]{yang:2020:randomized}
Yang, G., Duan, T., Hu, J.~E., Salman, H., Razenshteyn, I.~P., and Li, J.
\newblock Randomized smoothing of all shapes and sizes.
\newblock In \emph{Proceedings of the 37th International Conference on Machine Learning, ({ICML})}, Proceedings of Machine Learning Research. {PMLR}, 2020.

\bibitem[Yoon et~al.(2021)Yoon, Hwang, and Lee]{yoon:2021:adv}
Yoon, J., Hwang, S.~J., and Lee, J.
\newblock Adversarial purification with score-based generative models.
\newblock In \emph{Proceedings of the 38th International Conference on Machine Learning, ({ICML})}. {PMLR}, 2021.

\bibitem[Zagoruyko \& Komodakis(2016)Zagoruyko and Komodakis]{sergy:2016:wideres}
Zagoruyko, S. and Komodakis, N.
\newblock Wide residual networks.
\newblock In \emph{Proceedings of the British Machine Vision Conference 2016, ({BMVC})}, 2016.

\bibitem[Zhang et~al.(2018)Zhang, Li, Li, Wang, Zhong, and Fu]{zhang:2018:image}
Zhang, Y., Li, K., Li, K., Wang, L., Zhong, B., and Fu, Y.
\newblock Image super-resolution using very deep residual channel attention networks.
\newblock In \emph{Computer Vision - {ECCV} 2018 - 15th European Conference}, Lecture Notes in Computer Science. Springer, 2018.

\end{thebibliography}
\bibliographystyle{icml2024/icml2024}

\newpage
\appendix
\onecolumn
\section{Background for General Defense\label{app:back}}
\bheading{More on General Defense for Query-based Attacks}
The defense for black-box query-based attacks remain relatively unexplored compared to the defense for white-box methods~\cite{qin:2021:random}. Under considerations of real-world constraints such as clean accuracy and inference speed, most training-time and testing-time defenses present significant limitations for deployment in real-world MLaaS systems.  For training-time defense, the aim is to improve the worst-case robustness of the models. Adversarial training (AT) has been considered as one of the fundamental practices of training time defense, where models are trained on augmented datasets with specially-crafted samples to ensure robustness of the models~\citep{madr:2018:towards}. Other training-time examples like gradient masking~\citep{florian:2018:ensemble} and defensive distillation~\cite{papernot:2016:distillation} are also proposed to improve the robustness of the models. Nonetheless, such methods are unsuitable for MLaaS systems because of the extensive training costs and potential for decreased accuracy on clean examples. With regards to testing-time defense, a prominent defense from white-box attacks, randomized smoothing, can ensure the robustness of the model within a certain confidence level~\citep{yang:2020:randomized}, a feature known as certified robustness. Another example for multiple inference to improve the robustness is called random self-ensemble~\cite{liu:2018:towards}. However, the inference speed of randomized smoothing is too slow to be deployed on real-world MLaaS systems. Other testing-time defenses tend towards randomization of the input or output. Rand noise defense proposed by \citet{qin:2021:random} leverages Gaussian noises as the input to the model to disturb the gradient estimation. Yet, the defense is ineffective against strong attack methods and hurts the clean accuracy. The output-based defense, like confidence poisoning~\citep{chen:2022:aaa} influences the examples on the classification boundary and cannot defend against the decision-based attacks.

\section{Search Techniques for black-box attacks\label{app:search}}
\bheading{Projected Gradient Descent}
A common approach of performing adversarial attacks (often white-box) is to leverage projected gradient descent algorithm~\citep{carlini:2017:towards}:
\begin{equation}
    \label{eq:pgd}
    \vx_{t+1} = Proj_{\mathcal{N}_R(\vx)}(\vx - \eta_t g(\vx)_t).
\end{equation}

\bheading{Gradient Estimation}
While there can be various gradient estimators, we consider the following gradient estimator in our theoretical analysis:
\begin{equation}
    \label{eq:gradient_0}
    g(\vx) = \frac{f(\vx+\mu \rvu) - f(\vx)}{\mu}\rvu.
\end{equation}

\bheading{Heuristic Search}
For heuristic search, the main issue is to determine the search direction. One commonly used search direction can be:
\begin{equation}
    s(\vx) = \mathbb{I}(h(\vx)<0)\cdot\mu\rvu,\quad\text{where} \quad h(\vx) = f(\vx+\mu \rvu) - f(\vx),
\end{equation}
where $\mathbb{I}$ is the indicator function. The search direction is determined by the sign of the objective function. If the objective function is negative, the search direction is the gradient direction. Otherwise, the search direction is the opposite of the gradient direction. The corresponding updating direction will be Equation~(\ref{eq:pgd}) with $- \eta_t g(\vx)_t$ replaced by $s(\vx_t)$.

\section{Details for Experiments}
\subsection{Comparison of the Inference Speed on ImageNet Dataset\label{app:infer_speed}}
\begin{figure}[h]
    \centering
    \includegraphics[width=0.45\textwidth]{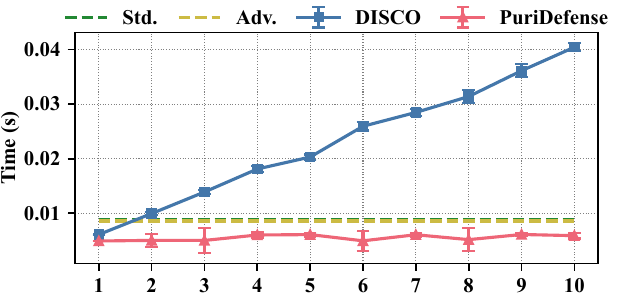}
    \vspace{-0.75\baselineskip}
    \caption{The inference speed of DISCO and PuriDefense on ImageNet dataset.\label{fig:imagenet_latency}}
    \vspace{-1.\baselineskip}
\end{figure}
\subsection{Details of the Attacks\label{app:attack}}
We utilize 5 SOTA query-based attacks for evaluation: NES~\citep{ilyas:2018:blackbox}, SimBA~\citep{guo:2019:simba}, Square~\citep{andriushchenko:2020:square}, Boundary~\citep{brendel:2018:decision}, and HopSkipJump~\citep{chen:2020:hop}. The category of them is listed below in \tabref{tab:cate}.
\begin{table}[h]
    \centering
    \caption{The category of the attacks along with the techniques they use.\label{tab:cate}}
    \begin{tabular}{ccc}
        \toprule
                       & Gradient Estimation & Random Search \\
        \midrule
        Score-based    & NES                 & Square, SimBA \\
        Decision-based & Boundary            & HopSkipJump   \\
        \bottomrule
    \end{tabular}
\end{table}

\bheading{Implementation}
For Boundary Attack and HopSkipJump Attack, we adopt the implementation from Foolbox~\citep{jonas:2020:foolbox}. For Square Attack and SimBA, we use the implementation from ART library~\citep{nicolae:2018:art}. For NES, we implement it under the framework of Foolbox.

\bheading{Hyperparameters}
The hyperparameters used for the attacks are listed below for full reproducibility.
\begin{minipage}{\textwidth}

    \begin{minipage}[t]{\textwidth}
        \makeatletter\def\@captype{table}
        \centering
        \caption{The hyperparameters used for NES.}
        \begin{tabular}{lcc}
            \toprule
                                                       & CIFAR-10 & ImageNet \\
            \midrule
            $\eta$ (learning rate)                     & 0.01     & 0.0005   \\
            $q$ (number of points used for estimation) & 100      & 100      \\
            \bottomrule
        \end{tabular}
    \end{minipage}
    \begin{minipage}[t]{0.47\textwidth}
        \makeatletter\def\@captype{table}
        \centering
        \caption{The hyperparameters used for SimBA.}
        \resizebox{!}{1.5\baselineskip}{
            \begin{tabular}{lcc}
                \toprule
                                   & CIFAR-10 & ImageNet \\
                \midrule
                $\eta$ (step size) & 0.2      & 0.2      \\
                \bottomrule
            \end{tabular}
        }
    \end{minipage}
    \begin{minipage}[t]{0.52\textwidth}
        \makeatletter\def\@captype{table}
        \centering
        \caption{The hyperparameters used for Square.}
        \resizebox{!}{1.5\baselineskip}{
            \begin{tabular}{lcc}
                \toprule
                                                  & CIFAR-10       & ImageNet       \\
                \midrule
                $\mu$ (Fraction of Pixel Changed) & $0.05\sim 0.5$ & $0.05\sim 0.5$ \\
                \bottomrule
            \end{tabular}
        }
    \end{minipage}
    \begin{minipage}[t]{0.48\textwidth}
        \makeatletter\def\@captype{table}
        \centering
        \caption{The hyperparameters used for Boundary Attack.}
        \resizebox{!}{2\baselineskip}{
            \begin{tabular}{lcc}
                \toprule
                                                  & CIFAR-10 & ImageNet \\
                \midrule
                $\eta_{sph}$ (Spherical Step)     & 0.01     & 0.01     \\
                $\eta_{src}$ (Source Step)        & 0.01     & 0.01     \\
                $\eta_{c}$ (Source Step Converge) & 1E-7     & 1E-7     \\
                $\eta_{a}$ (Step Adaptation)      & 1.5      & 1.5      \\
                \bottomrule
            \end{tabular}
        }
    \end{minipage}
    \begin{minipage}[t]{0.48\textwidth}
        \makeatletter\def\@captype{table}
        \centering
        \caption{The hyperparameters used for HopSkipJump Attack.}
        \resizebox{!}{2\baselineskip}{
            \begin{tabular}{lcc}
                \toprule
                                               & CIFAR-10 & ImageNet \\
                \midrule
                $n$ (number for estimation)    & 100      & 100      \\
                $\gamma$ (Step Control Factor) & 1        & 1        \\
                \bottomrule
            \end{tabular}
        }
    \end{minipage}
\end{minipage}

\subsection{Detailed Information for the Defense\label{app:defense}}
We compare our algorithm with three types of baseline defense. For random noise defense, we use a Gaussian noise with $\sigma=0.041$ as the input to the classifier. For the spatial smoothing transformations, we set the size of the kernel filter to be 3. For DISCO model, we implement a naive version without randomness using pre-trained models from the official implementation. We pick the pre-trained model under PGD attack~\citep{madr:2018:towards} as the core local implicit model for DISCO.

\subsection{Testing models.}\label{app:model}
We use the WideResNet-28-10~\citep{sergy:2016:wideres} achieving a 94.78\% accuracy rate for CIFAR-10, and ResNet-50~\citep{kaiming:2016:resnet} with 76.52\% accuracy for ImageNet for the standardly trained models. For models trained adversarially, we use the WideResNet-28-10 model with 89.48\% adversarial accuracy trained by \citet{gowal:2020:uncovering} for CIFAR-10 and ResNet-50 model with 64.02\% trained by \citet{salman:2020:do} for ImageNet.

\section{Details for PuriDefense\label{app:setting}}
\subsection{Efficient Structure for Inference Speed Up\label{app:structure}}
Upon examining the implementation of the local implicit function as outlined by \citep{chen:2021:learning}, it became apparent that the local ensemble mechanism geared toward enhancing performance in super-resolution tasks is unnecessary for the adversarial purification process. Refer to \Figref{fig:local_ensemble} for a conceptual depiction of the previously utilized technique.

Given that image purification requires direct one-to-one pixel correspondence, the act of inferring the same pixel multiple times before averaging the outcomes is redundant. Consequently, discarding this mechanism simplifies the approach to using the local implicit function solely for image purification. This adjustment accelerates the inference speed of the original implementation of local implicit models by a factor of four.

\begin{figure}[h]
    \centering
    \includegraphics[width=0.25\textwidth]{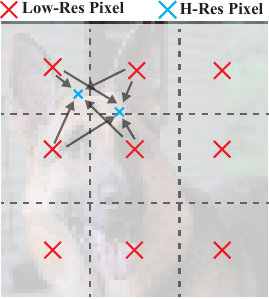}
    \vspace{-0.75\baselineskip}
    \caption{\small An illustration of the local ensemble mechanism in the local implicit function for multi-resolution support. High resolution pixels are predicted based on high-level features from nearby low resolution pixels. \label{fig:local_ensemble}}
    \vspace{-1.\baselineskip}
\end{figure}

\subsection{Training Diversified Purifiers\label{app:diversity}}
For improving the diversity of the purifiers, we consider the following influence factors in \tabref{tab:factor} and use their combinations to train 12 different purification models for each dataset.

\bheading{Non-defended Models.}
For CIFAR-10, we use a pre-trained ResNet-18 model~\citep{resnet18_cifar} for generating adversarial examples under white-box attacks. For ImageNet, we use a pre-trained ResNet-50 model~\citep{resnet50_image} for generating adversarial examples. We list these models along with their clean accuracy on the test set in \tabref{tab:non_defended}.

\bheading{Training Datasets.}
For CIFAR-10, we use the whole training set of CIFAR-10 to train the purifiers. While for datasets containing natural images, using only a subset of the training set can help the purification model learn the natural image. Thus, for ImageNet dataset, we randomly sampled $10$ images from each class of the original training set to form a new training set for training the purifiers.

\begin{table}[h]
    \centering
    \caption{\small The factors that are considered when training diversified purification models for PuriDefense.\label{tab:factor}}
    \vspace{0.25\baselineskip}
    \begin{tabular}{ll}
        \toprule
        Hyperparameter                     & Value                                   \\
        \midrule
        \multirow{3}{*}{Attack Type }      & FGSM~\citep{goodfellow:2015:explaining} \\
                                           & PGD~\citep{madr:2018:towards}           \\
                                           & BIM~\cite{kurakin:2017:adversarial}     \\
        \hline
        \multirow{2}{*}{Encoder Structure} & RCAN~\citep{zhang:2018:image}           \\
                                           & EDSR~\citep{lim:2017:edsr}              \\
        \hline
        Feature Depth                      & 32, 64                                  \\
        \bottomrule
    \end{tabular}
\end{table}

\begin{table}[h]
    \centering
    \caption{\small The non-defended models used to generate adversarial examples.\label{tab:non_defended}}
    \vspace{0.25\baselineskip}
    \begin{tabular}{lll}
        \toprule
        Model Structure                  & Dataset  & Clean Accuracy \\
        \midrule
        ResNet-18~\citep{resnet18_cifar} & CIFAR-10 & 94.8\%         \\
        \hline
        ResNet-50~\citep{resnet50_image} & ImageNet & 80.9\%         \\
        \bottomrule
    \end{tabular}
\end{table}


\section{Supplementary Materials for Theoretical Analysis}
\subsection{Important Definitions\label{app:definitions}}
\begin{MyDef}
    The Gaussian-Smoothing function corresponding to $f(\vx)$ with $\mu>0, \rvu\sim\mathcal{N}(\bm{0}, \bm{I})$ is
    \begin{equation}
        f_\mu(\vx) = \frac{1}{(2\pi)^{d/2}}\int f(\vx+\mu\rvu)e^{-\frac{\|\rvu\|^2}{2}}\mathrm{d}\rvu
    \end{equation}
\end{MyDef}
\subsection{Proof of Theorem 1\label{app:proof1}}
The essential lemmas are given as follows, the complete proofs are shown in \cite{nesterov:2017:random}.
\begin{MyAppLemma}
    \label{app_lemma:1}
    Let $f(\vx)$ be the Lipschitz-continuous function, $|f(\vy) - f(\vx)| \leq L_0(f)\|\vy - \vx\|$. Then
    \begin{equation*}
        L_1(f_\mu) = \frac{d^{\frac{1}{2}}}{\mu} L_0(f)
    \end{equation*}
\end{MyAppLemma}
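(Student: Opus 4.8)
The plan is to establish the gradient-Lipschitz bound by differentiating the smoothing integral in closed form and then controlling the resulting gradient difference using nothing more than the Lipschitz hypothesis on $f$. First I would rewrite the Gaussian-smoothing integral through the substitution $\vy=\vx+\mu\rvu$, giving $f_\mu(\vx)=\frac{1}{(2\pi)^{d/2}\mu^d}\int f(\vy)\,e^{-\|\vy-\vx\|^2/(2\mu^2)}\,\mathrm{d}\vy$. Here the base point $\vx$ now sits inside the smooth Gaussian kernel, so I can differentiate under the integral sign; undoing the substitution yields the gradient representation
\[
    \nabla f_\mu(\vx)=\frac{1}{\mu(2\pi)^{d/2}}\int f(\vx+\mu\rvu)\,\rvu\,e^{-\|\rvu\|^2/2}\,\mathrm{d}\rvu=\frac{1}{\mu}\,\E_{\rvu}\!\left[f(\vx+\mu\rvu)\,\rvu\right].
\]

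Second, I would write the gradient difference at two arbitrary points. Because the Gaussian weight depends only on $\rvu$ and not on the evaluation point, the two gradient integrals share the same kernel and subtract cleanly:
\[
    \nabla f_\mu(\vy)-\nabla f_\mu(\vx)=\frac{1}{\mu(2\pi)^{d/2}}\int \left(f(\vy+\mu\rvu)-f(\vx+\mu\rvu)\right)\rvu\,e^{-\|\rvu\|^2/2}\,\mathrm{d}\rvu.
\]
Taking Euclidean norms, moving the norm inside the vector integral, and invoking Assumption~\ref{assump:1} in the shifted form $|f(\vy+\mu\rvu)-f(\vx+\mu\rvu)|\leq L_0(f)\|\vy-\vx\|$ collapses everything to the scalar estimate $\|\nabla f_\mu(\vy)-\nabla f_\mu(\vx)\|\leq \frac{L_0(f)}{\mu}\|\vy-\vx\|\,\E_{\rvu}[\|\rvu\|]$. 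A final application of Jensen's inequality, $\E_{\rvu}[\|\rvu\|]\leq (\E_{\rvu}[\|\rvu\|^2])^{1/2}=d^{1/2}$, produces $\|\nabla f_\mu(\vy)-\nabla f_\mu(\vx)\|\leq \frac{d^{1/2}}{\mu}L_0(f)\|\vy-\vx\|$, i.e.\ $L_1(f_\mu)=\frac{d^{1/2}}{\mu}L_0(f)$.

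The step I expect to be the main obstacle is the rigorous justification of differentiating under the integral sign, since $f$ is assumed only Lipschitz and may be nowhere differentiable: all the regularity must be supplied by the kernel rather than by $f$. I would discharge this by noting that Assumption~\ref{assump:1} forces the linear growth bound $|f(\vy)|\leq |f(\vx)|+L_0(f)\|\vy-\vx\|$, so $f$ times the kernel and its $\vx$-derivative are dominated by integrable Gaussian-moment expressions, and dominated convergence then legitimizes the interchange. One caveat worth flagging is that the displayed equality is really the Jensen upper bound $\E_{\rvu}[\|\rvu\|]\leq d^{1/2}$ (the exact first moment $\sqrt{2}\,\Gamma(\tfrac{d+1}{2})/\Gamma(\tfrac{d}{2})$ is slightly smaller); reading the statement as the Lipschitz estimate matches the convention of \cite{nesterov:2017:random}, from which this lemma is quoted and on which Theorem~\ref{thm:1} relies.
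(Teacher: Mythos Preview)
Your argument is correct and is precisely the standard proof from \cite{nesterov:2017:random}, to which the paper simply defers without reproducing any details (``the complete proofs are shown in \cite{nesterov:2017:random}''). Your observation that the stated equality is really the Jensen upper bound is also accurate and consistent with how the constant is used downstream in Theorem~\ref{thm:1}.
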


We define the $p$-order moment of normal distribution as $M_p$. Then we have
\begin{MyAppLemma}
    \label{app_lemma:2}
    For $p\in [0, 2]$, we have
    \begin{equation*}
        M_p \leq d^{\frac{p}{2}}
    \end{equation*}
    If $p \geq 2$, the we have two-side bounds
    \begin{equation*}
        d^{\frac{p}{2}} \leq M_p \leq (p+d)^{\frac{p}{2}}
    \end{equation*}
\end{MyAppLemma}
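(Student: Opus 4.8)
The plan is to read $M_p$ as the $p$-th moment of the Euclidean norm, $M_p = \E\big[\|\rvu\|^p\big]$ with $\rvu \sim \mathcal{N}(\bm{0}, \bm{I})$, and to set $X = \|\rvu\|^2$, a $\chi^2_d$ variable, so that $M_p = \E[X^{p/2}]$ and $\E[X] = d$. With this reduction the three inequalities split into two immediate Jensen bounds and one genuine computation. I would dispatch every bound featuring $d^{p/2}$ by applying Jensen's inequality to the map $t \mapsto t^{p/2}$. When $p \in [0,2]$ the exponent $p/2 \in [0,1]$ makes this map concave, so $\E[X^{p/2}] \le (\E X)^{p/2} = d^{p/2}$, which is the first claim. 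When $p \ge 2$ the exponent $p/2 \ge 1$ makes the map convex, so $\E[X^{p/2}] \ge (\E X)^{p/2} = d^{p/2}$, which is the lower half of the two-sided bound. Both steps use only $\E X = d$ and need no explicit density.

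The substantive part is the upper bound $M_p \le (d+p)^{p/2}$ for $p \ge 2$. First I would establish the moment recursion $M_p = (d+p-2)\,M_{p-2}$, valid for every real $p \ge 2$. This follows by writing $M_p$ through its radial form $M_p = c_d \int_0^\infty r^{p+d-1} e^{-r^2/2}\,dr$, with $c_d$ a normalizing constant depending only on $d$, and integrating by parts with $dv = r e^{-r^2/2}\,dr$, $v = -e^{-r^2/2}$; the boundary term vanishes (since $p+d-2 > 0$) and the surviving integral reproduces $M_{p-2}$ up to the factor $(p+d-2)$. Equivalently one reads the recursion off the closed form $M_p = 2^{p/2}\,\Gamma\!\big(\tfrac{d+p}{2}\big)/\Gamma\!\big(\tfrac{d}{2}\big)$ via $\Gamma(z+1) = z\Gamma(z)$.

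With the recursion in hand I would close the upper bound by a downward-descent induction on $p$. The base range is $p \in [0,2]$, where the concave Jensen bound already gives $M_p \le d^{p/2} \le (d+p)^{p/2}$. For $p > 2$, assuming the bound at $p-2$, namely $M_{p-2} \le (d+p-2)^{(p-2)/2}$, the recursion yields
\begin{equation}
M_p = (d+p-2)\,M_{p-2} \le (d+p-2)^{1+(p-2)/2} = (d+p-2)^{p/2} \le (d+p)^{p/2}.
\end{equation}
Since subtracting $2$ repeatedly from any $p \ge 2$ reaches $[0,2)$ after finitely many steps, the bound propagates upward to all $p \ge 2$.

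I expect the only delicate point to be that the argument must hold for all real $p$, not merely even integers. This is precisely why the recursion is derived for arbitrary real $p \ge 2$ rather than by integer stepping, and why the induction base is taken to be the whole interval $[0,2]$ instead of a single value: the descent $p, p-2, p-4, \dots$ then terminates correctly in the base interval regardless of the fractional part of $p$, and the climb back up is a single monotone estimate. Everything else—the Jensen comparisons and the integration-by-parts that produces the recursion—is routine once the closed form or radial integral is set up.
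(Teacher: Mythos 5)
Your proof is correct, and it is worth noting that the paper itself contains no proof of this lemma at all: it states the bounds and defers entirely to \cite{nesterov:2017:random}, so the real comparison is with Lemma~1 of that cited work. Your treatment of the two bounds involving $d^{p/2}$ coincides with the standard argument there (Jensen/Lyapunov applied to $t\mapsto t^{p/2}$ and the $\chi^2_d$ variable $X=\|\rvu\|^2$ with $\E X=d$). Where you genuinely diverge is the upper bound $M_p\le(d+p)^{p/2}$ for $p\ge 2$: Nesterov and Spokoiny prove it by a Laplace-transform argument, bounding $t^{p/2}\le\bigl(\tfrac{p}{2\tau e}\bigr)^{p/2}e^{\tau t}$, evaluating the resulting Gaussian integral $(1-2\tau)^{-d/2}$, optimizing $\tau=\tfrac{p}{2(d+p)}$, and finishing with $(1+p/d)^{d/2}\le e^{p/2}$; you instead derive the exact recursion $M_p=(d+p-2)M_{p-2}$ for all real $p\ge 2$ (via integration by parts, equivalently the closed form $M_p=2^{p/2}\Gamma\bigl(\tfrac{d+p}{2}\bigr)/\Gamma\bigl(\tfrac{d}{2}\bigr)$) and propagate the bound by downward induction with the whole interval $[0,2]$ as base. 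Both routes land on the identical constant $(d+p)^{p/2}$; yours is more elementary and self-contained (no moment generating function, no optimization step) and exposes the exact value of $M_p$, while the exponential-moment route is the one that generalizes to settings, such as sub-Gaussian vectors, where exact moments are not available. Your care about non-integer $p$ -- deriving the recursion for arbitrary real $p\ge 2$ and using an interval rather than a point as the induction base -- is exactly what makes the induction legitimate. One cosmetic slip: the descent $p,p-2,p-4,\dots$ can terminate at the endpoint $2$ rather than in $[0,2)$ as you wrote, but since your stated base interval is the closed set $[0,2]$, nothing breaks.
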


\begin{MyAppLemma}
    \label{app_lemma:3}
    Let $f(\vx)$ be the Lipschitz-continuous function, $|f(\vy) - f(\vx)| \leq L_0(f)\|\vy - \vx\|$. And $\vm(\vx)$ is Lipschitz-continuous for every dimension. Then
    \begin{equation*}
        L_0(f\circ \vm) \leq L_0(f)L_0(\vm)
    \end{equation*}
    where $L_0(\vm)$ is defined as $L_0(\vm) = \max_i L_0(m_i)$.
\end{MyAppLemma}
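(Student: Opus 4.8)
\textbf{Proof proposal for Lemma~\ref{app_lemma:3}.}

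The plan is to prove the composition bound directly from the two given hypotheses by a one-line chaining of Lipschitz estimates, and then to identify the resulting constant with $L_0(f)L_0(\vm)$ under the definition $L_0(\vm)=\max_i L_0(m_i)$. First I would fix two arbitrary inputs $\vx,\vy\in\mathcal{X}$ and estimate the output difference of the composed map. Applying the hypothesis on $f$ (Assumption~\ref{assump:1}) to the points $\vm(\vx)$ and $\vm(\vy)$ gives
\begin{equation*}
    |f(\vm(\vy)) - f(\vm(\vx))| \leq L_0(f)\,\|\vm(\vy)-\vm(\vx)\|.
\end{equation*}
The remaining task is therefore to control $\|\vm(\vy)-\vm(\vx)\|$ by $L_0(\vm)\|\vy-\vx\|$, which is where the per-dimension Lipschitz hypothesis on $\vm$ enters.

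Next I would expand the output norm coordinate-wise. Writing $m_i$ for the $i$-th output dimension of $\vm$, the vector-valued Lipschitz bound follows by summing the scalar bounds in squared form:
\begin{equation*}
    \|\vm(\vy)-\vm(\vx)\|^2 = \sum_i |m_i(\vy)-m_i(\vx)|^2 \leq \sum_i L_0(m_i)^2\,\|\vy-\vx\|^2 \leq d\,L_0(\vm)^2\,\|\vy-\vx\|^2,
\end{equation*}
using $L_0(\vm)=\max_i L_0(m_i)$ in the last step. Taking square roots yields $\|\vm(\vy)-\vm(\vx)\|\leq \sqrt{d}\,L_0(\vm)\,\|\vy-\vx\|$, and substituting into the first display bounds the composed difference by $\sqrt{d}\,L_0(f)L_0(\vm)\|\vy-\vx\|$.

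The one subtlety, and the main point requiring care, is the dimension factor: the naive coordinate-wise argument produces a $\sqrt{d}$ that the stated bound $L_0(f\circ\vm)\leq L_0(f)L_0(\vm)$ does not contain. To recover the clean constant I would instead interpret $L_0(\vm)$ as the Lipschitz constant of the vector map in the $\ell_2$ norm directly, \ie $\|\vm(\vy)-\vm(\vx)\|\leq L_0(\vm)\|\vy-\vx\|$, rather than the maximum of the coordinate constants; the phrase ``Lipschitz-continuous for every dimension'' together with the definition $L_0(\vm)=\max_i L_0(m_i)$ should be read as supplying exactly such a norm-level constant (equivalently, one absorbs the $\sqrt{d}$ into the definition, consistent with how $L_0(\vm_0)$ is used in the $\gamma(\vm_0,\nu)$ expression of Theorem~\ref{thm:1}). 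With that reading the chaining is immediate: for all $\vx,\vy$,
\begin{equation*}
    |f(\vm(\vy))-f(\vm(\vx))| \leq L_0(f)\,\|\vm(\vy)-\vm(\vx)\| \leq L_0(f)L_0(\vm)\,\|\vy-\vx\|,
\end{equation*}
and since $\vx,\vy$ were arbitrary this is precisely the claim $L_0(f\circ\vm)\leq L_0(f)L_0(\vm)$. I expect the only genuine obstacle to be pinning down this normalization convention so that the dimension constant is handled consistently with the rest of the paper; the inequalities themselves are routine.
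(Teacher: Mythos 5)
Your chaining argument reaches the same two-step skeleton as the paper's proof (apply the Lipschitz property of $f$ at the points $\vm(\vx),\vm(\vy)$, then bound $\|\vm(\vy)-\vm(\vx)\|$), but the two arguments resolve the dimension factor in genuinely different ways, and you have put your finger on a real imprecision that the paper's own proof glosses over. The paper's middle step reads
\begin{equation*}
    \|\vm(\vy)-\vm(\vx)\| = \sqrt{\textstyle\sum_{i=1}^d L_0(m_i)^2\,(\vy_i-\vx_i)^2} \;\leq\; L_0(\vm)\,\|\vy-\vx\|,
\end{equation*}
which implicitly interprets ``Lipschitz-continuous for every dimension'' as a coordinate-to-coordinate bound $|m_i(\vy)-m_i(\vx)|\leq L_0(m_i)\,|\vy_i-\vx_i|$, \ie each output pixel is controlled by the matching input pixel alone (and even then the displayed equality should be an inequality). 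Under that diagonal reading the dimension-free constant $L_0(f)L_0(\vm)$ follows exactly as stated. You instead took the literal reading $|m_i(\vy)-m_i(\vx)|\leq L_0(m_i)\,\|\vy-\vx\|$, correctly derived the resulting $\sqrt{d}$, and then repaired the statement by promoting $L_0(\vm)$ to a norm-level Lipschitz constant of the vector map. Your caution is warranted: under the literal reading the clean bound is simply false --- take $\vm(\vx)=(x_1,x_1,\dots,x_1)$ and $f(\vz)=d^{-1/2}\sum_i z_i$, so that $L_0(f)=L_0(m_i)=1$ yet $L_0(f\circ\vm)=\sqrt{d}$ --- so some strengthening of the hypothesis is unavoidable. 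The paper's implicit diagonal assumption and your norm-level assumption are two different such strengthenings: the former matches the pixel-to-pixel intuition of the patchwise purifiers, the latter is the cleaner analytic condition; either should be made explicit, since a stray $\sqrt{d}$ here would propagate into the $d$-dependence of the query-complexity bound in Theorem~\ref{thm:1}.
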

\begin{proof}
    \begin{equation}
        \begin{aligned}
            |f(\vm(\vy)) - f(\vm(\vx))| & \leq L_0(f)\|\vm(\vy) - \vm(\vx)\|                       \\
                                        & = L_0(f)\sqrt{\sum_{i=1}^dL_0(\vm_i)^2(\vy_i - \vx_i)^2} \\
                                        & \leq L_0(f)L_0(\vm)\|\vy - \vx\|
        \end{aligned}
    \end{equation}
\end{proof}



The following content is the proof for Theorem~\ref{thm:1}.

\begin{proof}
    According to the property of Lipschitz-continuous gradient,
    \begin{equation}
        F_{\mu, K} (\vx_{t+1}) \leq F_{\mu, K} (\vx_{t}) - \eta_t \langle \nabla F_{\mu, K} (\vx_t), G_{\mu, K} (x_t) \rangle + \frac{1}{2} \eta^2_t L_1 (F_{\mu, K}) \|G_{\mu, K} (\vx_t)\|^2
    \end{equation}
    The $G_{\mu,K}(\vx_t)$ can be decomposed as
    \begin{equation}
        \begin{aligned}
            G_{\mu, K} (\vx_t) & = \frac{f(\vm_{k_{t1}}(\vx+\mu \rvu)) - f(\vm_{k_{t2}}(\vx))}{\mu}\rvu_t                                            \\
                               & = \frac{f(\vm_{k_{t1}}(\vx+\mu \rvu)) - f(\vm_0(\vx+\mu \rvu)) + f(\vm_0(\vx+\mu \rvu)) - f(\vm_0(\vx))}{\mu}\rvu_t \\
                               & + \frac{f(\vm_0(\vx)) - f(\vm_{k_{t2}}(\vx))}{\mu}\rvu_t                                                            \\
        \end{aligned}
    \end{equation}
    The squared term $\|G_{\mu, K} (\vx_t)\|^2$ is bounded by
    \begin{equation}
        \begin{aligned}
            \|G_{\mu, K} (\vx_t)\|^2 & \leq \frac{4\nu^2}{\mu^2}L_0(f)^2\|\rvu_t\|^2 + \frac{4\nu}{\mu}L_0(F)L_0(f)\|\rvu_t\|^3  + L_0(F)^2\|\rvu_t\|^4 \\
        \end{aligned}
    \end{equation}
    Take the expectation over $\rvu_t$, $k_{t1}$, and $k_{t2}$, use Lemma~\ref{app_lemma:2}, we have
    \begin{equation}
        \begin{aligned}
            F_{\mu, K} (\vx_{t+1}) & \leq F_{\mu, K} (\vx_{t}) - \eta_t\|\nabla F_{\mu, K} (\vx_t)\|^2                                                                     \\
                                   & +\frac{1}{2}\eta_t^2L_1(F_{\mu,K})(\frac{4\nu^2}{\mu^2}L_0(f)^2d + \frac{4\nu}{\mu}L_0(F)L_0(f)(d+3)^{\frac{3}{2}} + L_0(F)^2(d+4)^2)
        \end{aligned}
    \end{equation}
    For $L_1(F_{\mu, K})$, we have:
    \begin{equation}
        L_1(F_{\mu, K}) = \frac{1}{K}\sum_{k=1}^K L_1(f_{\mu}(\vm_k)) \leq  \frac{L_0(F)d^{\frac{1}{2}}}{\mu}
    \end{equation}
    Use Lemma~\ref{app_lemma:1}, and the dimension $d$ is high, we have
    \begin{equation}
        \begin{aligned}
            F_{\mu,K}(\vx_{t+1}) & \leq F_{\mu, K} (\vx_{t}) - \eta_t\|\nabla F_{\mu, K} (\vx_t)\|^2                                                                                     \\
                                 & + \frac{1}{2}\eta_t^2 \frac{L_0(f)^3L_0(\vm_0)d^{\frac{3}{2}}}{\mu}(\frac{4\nu^2}{\mu^2} + \frac{4\nu}{\mu}L_0(\vm_0)d^{\frac{1}{2}} + L_0(\vm_0)^2d)
        \end{aligned}
    \end{equation}

    We take the expectation over $\rmU_t, \rk_t$.
    \begin{equation}
        \begin{aligned}
            \mathbb{E}_{\rmU_t, \rk_t}[F_{\mu, K} (\vx_{t+1})] & \leq \mathbb{E}_{\rmU_{t-1}, \rk_{t-1}}[F_{\mu, K} (\vx_{t})] - \eta_t\mathbb{E}_{\rmU_t, \rk_t}[\|\nabla F_{\mu, K} (\vx_t)\|^2]                     \\
                                                               & + \frac{1}{2}\eta_t^2 \frac{L_0(f)^3L_0(\vm_0)d^{\frac{3}{2}}}{\mu}(\frac{4\nu^2}{\mu^2} + \frac{4\nu}{\mu}L_0(\vm_0)d^{\frac{1}{2}} + L_0(\vm_0)^2d)
        \end{aligned}
    \end{equation}
    Now consider constant step size $\eta_t=\eta$, and sum over $t$ from $0$ to $Q$, we have
    \begin{equation}
        \begin{aligned}
            \frac{1}{Q+1}\sum_{t=0}^Q \mathbb{E}_{\rmU_t}[\|\nabla F_{\mu, K} (\vx_t)\|^2] & \leq \frac{1}{\eta}(\frac{F_{\mu, K} (\vx_{0}) - F_K^*}{Q+1})                                                                                     \\
                                                                                           & + \frac{1}{2}\eta \frac{L_0(f)^3L_0(\vm_0)d^{\frac{3}{2}}}{\mu}(\frac{4\nu^2}{\mu^2} + \frac{4\nu}{\mu}L_0(\vm_0)d^{\frac{1}{2}} + L_0(\vm_0)^2d)
        \end{aligned}
    \end{equation}
    Since the distance between the input variable should be bounded by $R$ and use Lipschitz-continuous, we have
    \begin{equation}
        \|F_{\mu, K}(\vx_0) - F^*_K\| \leq \frac{1}{K}L_0(f)\sum_{k=0}^K L_0(\vm_k) R \leq L_0(F) R
    \end{equation}
    Considering bounded $\mu \leq \hat{\mu} = \frac{\epsilon}{d^{\frac{1}{2}}L_0(F)}$ to ensure local Lipschitz-continuity, and set $\gamma(\vm_0, \nu) = \frac{4\nu^2}{\mu^2} + \frac{4\nu}{\mu}L_0(\vm_0)d^{\frac{1}{2}}+L_0(\vm_0)^2d$
    \begin{equation}
        \begin{aligned}
            \frac{1}{Q+1}\sum_{t=0}^Q \mathbb{E}_{\rmU_t, \rk_t}[\|\nabla F_\mu (\vx_t)\|^2] \leq \frac{1}{\eta}(\frac{L_0(F)R}{Q+1}) + \frac{1}
            {2}\eta \frac{L_0(f)^4L_0(\vm_0)^2}{\epsilon }d^2\gamma(\vm_0, \nu)
        \end{aligned}
    \end{equation}
    Minimize the right hand size,
    \begin{equation}
        \eta = \sqrt{\frac{2R\epsilon }{(Q+1)L_0(f)^3d^2}}\cdot\sqrt{\frac{1}{L_0(\vm_0)\gamma(\vm_0, \nu)}}
    \end{equation}
    And we get
    \begin{equation}
        \frac{1}{Q+1}\sum_{t=0}^Q \mathbb{E}_{\rmU_t, \rk_t}[\|\nabla F_\mu (\vx_t)\|^2] \leq \sqrt{\frac{2L_0(f)^5Rd^2}{(Q+1)\epsilon}}\cdot \sqrt{\gamma(\vm_0, \nu)L_0(\vm_0)^3}
    \end{equation}
    To guarantee the expected squared norm of the gradient of function $F_\mu$ of the order $\delta$, the lower bound for the expected number of queries is
    \begin{equation}
        O(\frac{L_0(f)^5Rd^2}{\epsilon\delta^2}\gamma(\vm_0, \nu)L_0(\vm_0)^3)
    \end{equation}
\end{proof}

\subsection{Proof of Theorem 2}\label{app:proof2}
\begin{proof}
    \begin{equation}
        \begin{aligned}
            P(Sign(H(\vx)) \neq Sign(H_K(\vx))) & \leq P(|H_K(\vx) - H(\vx)| \geq |H(x)|)                                                                                                   \\
                                                & \leq \frac{\mathbb{E}[|H_K(\vx) - H(\vx)|]}{|H(\vx)|}                                                                                     \\
                                                & \leq \frac{\sqrt{\mathbb{E}[(H_K(\vx)-H(\vx)])^2}}{|H(\vx)|}                                                                              \\
                                                & \leq \frac{\sqrt{\mathbb{E}[2(f(\vm_{k_1}(\vx+\mu \ru)) - f(\vm_0(\vx+\mu \ru)))^2 + 2(f(\vm_{k_2}(\vx)) - f(\vm_0(\vx)))^2 ]}}{|H(\vx)|} \\
                                                & \leq \frac{2\nu L_0(f)}{|H(\vx)|}
        \end{aligned}
    \end{equation}
\end{proof}

\section{Accuracy for ImageNet}

\begin{table}[h]
    \centering
    \caption{Evaluation results of PuriDefense and other defense methods on ImageNet under 5 SOTA query-based attacks. The robust accuracy under 200/2500 queries is reported. The best defense mechanism under 2500 queries are highlighted in bold and marked with gray.}
    \resizebox{0.95\textwidth}{!}{
        \begin{tabular}{|c|l|c|c|c|c|c|c|}
            \hline
            Datasets                                                                   & Methods                          & Acc.                                & NES($\ell_\infty$)                                             & SimBA($\ell_2$)                                                & Square($\ell_\infty$)                                          & Boundary($\ell_2$)                                             & HopSkipJump($\ell_\infty$)                                     \\
            \hline
            \multirow{6}{*}{\begin{tabular}{c} ImageNet \\  (ResNet-50)\end{tabular} } & None                             & 76.5                                & 72.9/61.2                                                      & 65.5/50.8                                                      & 37.6/5.2                                                       & 70.7/64.9                                                      & \cellcolor{light-gray}\cellcolor{light-gray}\textbf{68.3/66.0} \\
            \cline{2-8}
                                                                                       & AT~\citep{gowal:2020:uncovering} & 57.5                                & 52.2/51.1                                                      & 54.5/50.7                                                      & 52.9/46.8                                                      & 57.1/57.0                                                      & 57.5/57.3                                                      \\
                                                                                       & FS~\citep{xu:2018:feature}       & \cellcolor{light-gray}\textbf{68.2} & 71.5/59.4                                                      & 61.4/28.2                                                      & 28.8/2.3                                                       & 66.5/60.4                                                      & 64.4/64.4                                                      \\
                                                                                       & IR~\citep{qin:2021:random}       & 64.7                                & \cellcolor{light-gray}\cellcolor{light-gray}\textbf{64.0/63.0} & 61.7/58.3                                                      & \cellcolor{light-gray}\cellcolor{light-gray}\textbf{62.3/60.1} & 65.3/64.9                                                      & 64.8/65.5                                                      \\
                                                                                       & DISCO~\citep{ho:2022:disco}      & 67.7                                & 65.9/60.9                                                      & 61.0/25.7                                                      & 34.5/5.1                                                       & 65.9/63.3                                                      & 67.0/64.6                                                      \\
                                                                                       & \textbf{PuriDefense (Ours)}      & 66.7                                & 65.5/62.9                                                      & \cellcolor{light-gray}\cellcolor{light-gray}\textbf{63.1/61.3} & 65.0/59.1                                                      & \cellcolor{light-gray}\cellcolor{light-gray}\textbf{66.6/65.3} & \cellcolor{light-gray}\cellcolor{light-gray}\textbf{66.2/66.0} \\
                                                                                       & \textbf{PuriDefense-AT (Ours)}   & 57.8                                & 56.0/54.7                                                      & 54.0/53.5                                                      & 55.4/53.2                                                      & 56.8/57.1                                                      & 56.8/56.1                                                      \\
            \hline
        \end{tabular}
    }

\end{table}

\section{Influence on Clean Accuracy\label{app:clean}}
One of the biggest advantage of local implicit purification is that it does not affect the clean accuracy of the model. While the results for evaluation of our mechanism's robust accuracy are shown in \tabref{tab:overall} in \secref{sec:eva}, we also provide the results for clean accuracy in \figref{fig:clean_acc}. Moreover, we have conducted extra experiments on the influence of the numbers of the image patches on the clean accuracy. The results are shown in \figref{fig:clean_acc_matt}. The results are obtained using the whole test set of CIFAR-10 and validation set of ImageNet.

\bheading{Comaprison of Defense Mechanisms.}
We first test clean accuracy on each purification model contained in DISCO and our method. The label name refers to the white-box attacks used to generate adversarial examples for training the purification model. For PuriDefense, a list of the purification model and their according attack and encoder combination can be found in \tabref{tab:purifiers_in_MATT}. For both datasets, all the purification models have a better clean accuracy than adding random noise. Moreover, they all achieve better clean accuracy than adversarially trained models on ImageNet dataset.

\bheading{Influence of the Number of Patches.}
We then test the influence of the number of patches on the clean accuracy. In PuriDefense, we only use image patches for feature encoding and purification. Therefore, the number of patches is a hyperparameter that can be tuned. We test the influence of the number of patches on the clean accuracy. The results are shown in \Figref{fig:clean_acc_matt}. We can see that the clean accuracy is not affected by the number of patches.

\begin{table}[h]
    \centering
    \caption{\small The purification model used in PuriDefense. \label{tab:purifiers_in_MATT}}
    \resizebox{0.9\textwidth}{!}{
        \begin{tabular}[c]{ccccccc}
            \hline
            \textbf{Model Type}      & p0        & p1        & p2         & p3         & p4        & p5        \\
            \midrule
            \textbf{Attack\_Encoder} & BIM\_EDSR & BIM\_RCAN & FGSM\_EDSR & FGSM\_RCAN & PGD\_EDSR & PGD\_RCAN \\
            \hline
        \end{tabular}
    }
\end{table}
\begin{figure}[t]
    \begin{minipage}[t]{0.60\linewidth}
        \centering
        \includegraphics[width=0.9\textwidth]{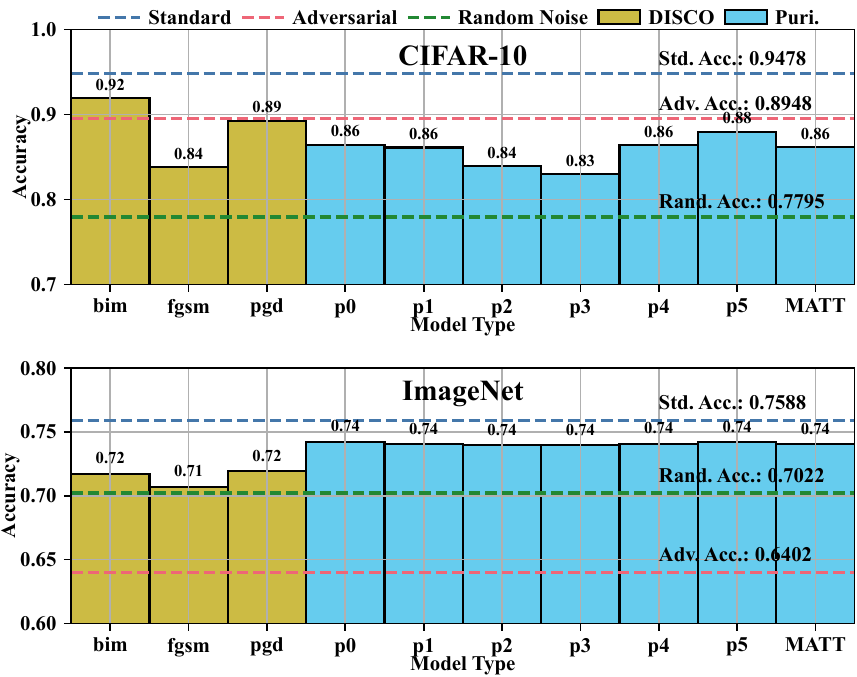}
        \caption{\small Comparison of defense mechanisms and models on clean accuracy. \textbf{Upper Figure:} CIFAR-10 dataset. \textbf{Lower Figure:} ImageNet dataset.}
        \label{fig:clean_acc}
    \end{minipage}
    \hfill
    \begin{minipage}[t]{0.38\linewidth}
        \centering
        \includegraphics[width=0.9\textwidth]{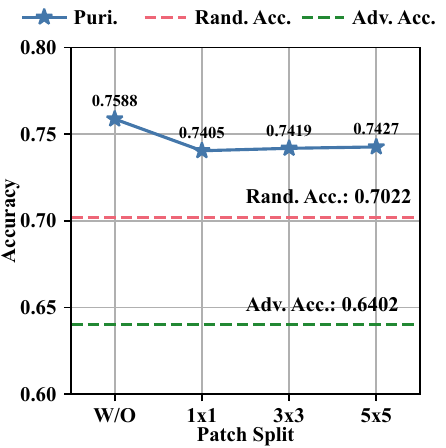}
        \caption{\small Influence of the number of image patches in PuriDefense.}
        \label{fig:clean_acc_matt}
    \end{minipage}
\end{figure}

\review{\section{Attacking Real-world System: A Case Study on Google Vision API}}

\review{
    \subsection{Experiments and Results}
    To illustrate the realistic threat of query-based black-box to real-world commercial systems and verify the effectiveness of our defense mechanisms, we have conducted a real-world attack using an example image on Google Vision API~\cite{google_vision}.

    \bheading{Google Vision API.}
    The google vision API provides the user with a wide range of image analysis services, including label detection, face detection, and text detection. The API is widely used in various applications, thus, the security of the API is crucial.

    \bheading{Attack Settings.}
    To verify the collapse of the system, we first need to formulate the problem. Since the API does not provide the service of image classification service, we adopt the label detection function and formulate the problem. Then, we compared the results of the obtained adversarial image under the straightforwad setting and the protected setting. We conceptually deploy the PuriDefense mechanism on the cloud by adding the purification process before the label detection function, as shown in \textcolor{red}{Figure X1}.

    \lheading{Problem Formulation.}
    The Google Vision API's label detection function returns all the labels along with the scores they detected. For our example image of a dog, the initial results contains more than 5 labels with confidence scores. We denote the image as robust if the top-5 labels contains the label "dog", otherwise we denote the image as adversarial. We start with an initial adversarial image obtained by leveraging the simple binary boundary search methods. Then we use the Boundary Attack~\cite{brendel:2018:decision} to minimize the distance between the adversarial image and original image. The query limit is set to 2,500.

    \bheading{Results.}
    The full attack flow and the results are shown in \textcolor{red}{Figure X2}. We can see that the initial adversarial image found has large discrepancy with the original image. However, after applying the Boundary Attack to attack the unprotected setting, we can obtain an adversarial image with almost imperceptible difference with the original image. With the the protection of PuriDefense, no adversarial image with subtle difference can be found.

    \subsection{Discussion on Real-world Attack}
    Under laboratory conditions, we considered the expense for testing the robustness. Since we have successfully launched an attack on the Google Vision API, we discuss the potential threat. Moreover, the transferability shows that our defense does not rely on the specific data distribution.

    \bheading{Expenses.}
    It is hard to develop specific defense mechanism or attack. We can leverage the general pictures to develop new mechanisms, without the consideration of the dataset distrbution.

    \bheading{Potential Threat.}
    The well known NSFW (Not Safe For Work)

}
\begin{figure}
    \centering
    \includegraphics[width=0.8\textwidth]{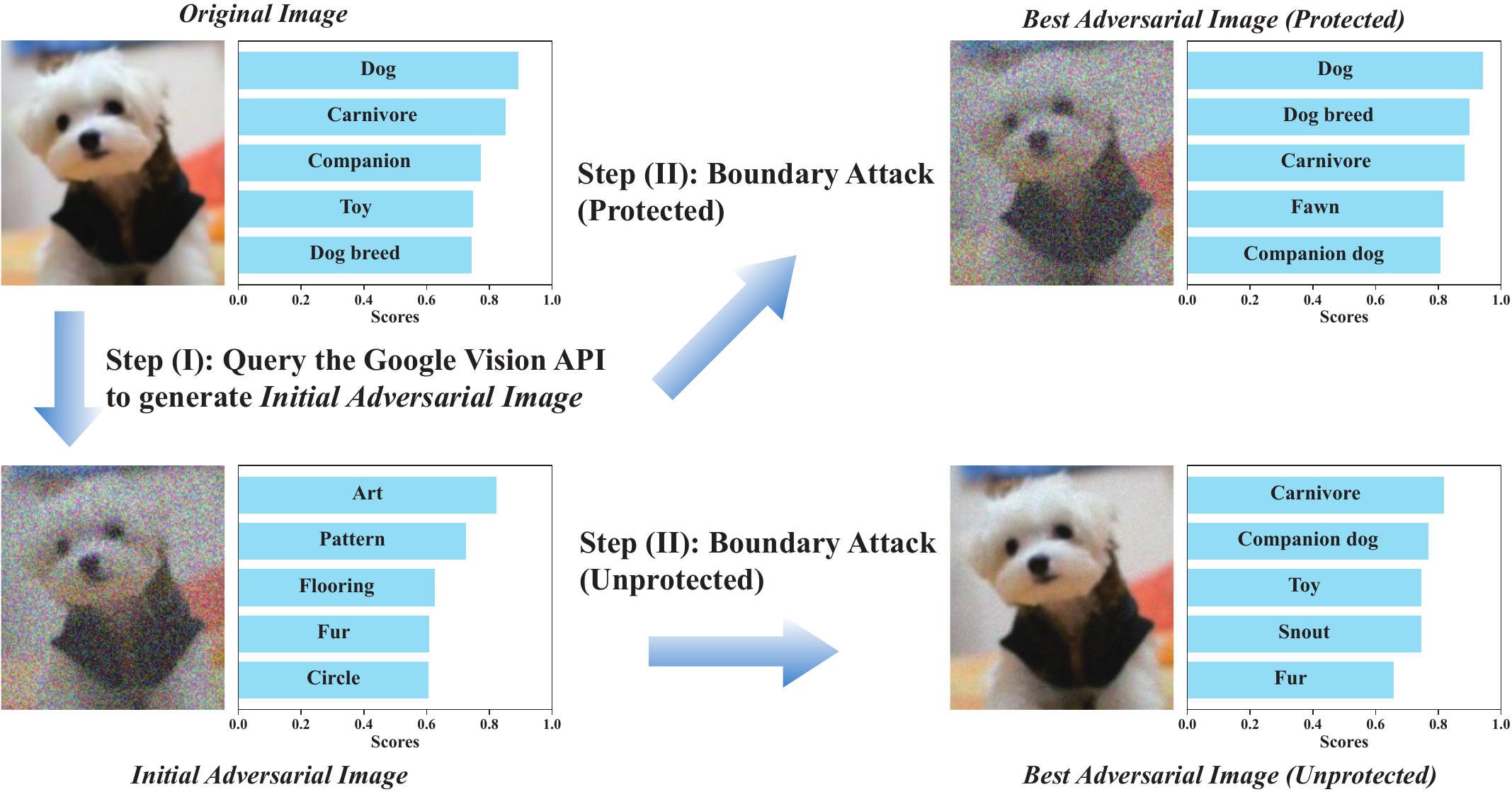}
    \caption{\small This is the real-world attack result.}
\end{figure}



\end{document}